\documentclass[sigconf]{acmart}
\AtBeginDocument{%
  }

\setcopyright{acmlicensed}
\copyrightyear{2026}
\acmYear{2026}
\acmDOI{XXXXXXX.XXXXXXX}
\acmConference[Conference acronym 'XX]{Make sure to enter the correct
  conference title from your rights confirmation email}{June 03--05,
  2018}{Woodstock, NY}
\acmISBN{978-1-4503-XXXX-X/2018/06}

\usepackage{algorithm}
\usepackage{algorithmic}
\usepackage{multirow}
\usepackage{booktabs}
\usepackage{graphicx}
\usepackage{pifont}
\usepackage{amsmath}
\usepackage{bm}
\usepackage{enumitem}
\usepackage{subcaption}
\usepackage{xcolor}
\usepackage{balance}
\newtheorem{remark}{Remark}

\begin{document}

\title{From Transfer to Collaboration: A Federated Framework for Cross-Market Sequential Recommendation}

\author{Jundong Chen}
\affiliation{%
  \institution{Beijing Jiaotong University}
  \city{Beijing}
  \country{China}
}
\email{jundongchen@bjtu.edu.cn}

\author{Honglei Zhang}
\affiliation{%
  \institution{Beijing Jiaotong University}
  \city{Beijing}
  \country{China}
}
\email{honglei.zhang@bjtu.edu.cn}

\author{Xiangmou Qu}
\affiliation{%
  \institution{OPPO Research Institute}
  \city{Shenzhen}
  \country{China}
}
\email{lokinko.cs@gmail.com}

\author{Haoxuan Li}
\affiliation{%
 \institution{Peking University}
 \city{Beijing}
 \country{China}
}
\email{hxli@stu.pku.edu.cn}
 
\author{Han Yu}
\affiliation{%
  \institution{Nanyang Technological University}
  \country{Singapore}
}
\email{han.yu@ntu.edu.sg}

\author{Yidong Li}
\authornote{Corresponding author}
\affiliation{%
  \institution{Beijing Jiaotong University}
  \city{Beijing}
  \country{China}
}
\email{ydli@bjtu.edu.cn}

\renewcommand{\shortauthors}{Chen et al.}

\begin{abstract}
Cross-market recommendation (CMR) aims to enhance recommendation performance across multiple markets. Due to its inherent characteristics, \textit{i.e.}, data isolation, non-overlapping users, and market heterogeneity, CMR introduces unique challenges and fundamentally differs from cross-domain recommendation (CDR). Existing CMR approaches largely inherit CDR by adopting the \textbf{one-to-one transfer} paradigm, where a model is pretrained on a source market and then fine-tuned on a target market. However, such a paradigm suffers from \textbf{CH1. source degradation}, where the source market sacrifices its own performance for the target markets, and \textbf{CH2. negative transfer}, where market heterogeneity leads to suboptimal performance in target markets. To address these challenges, we propose FeCoSR, a novel federated collaboration framework for cross-market sequential recommendation. Specifically, to tackle \textbf{CH1}, we introduce a many-to-many collaboration paradigm that enables all markets to jointly participate in and benefit from training. It consists of a federated pretraining stage for capturing shared behavior-level patterns, followed by local fine-tuning for market-specific item-level preferences. For \textbf{CH2}, we theoretically and empirically show that vanilla Cross-Entropy (CE) exacerbates market heterogeneity, undermining federated optimization. To address this, we propose a Semantic Soft Cross-Entropy (S$^2$CE) that leverages shared semantic information to facilitate collaborative behavioral learning across markets. Then, we design a market-specific adaptation module during fine-tuning to capture local item preferences. Extensive experiments on the real-world datasets demonstrate the advantages of FeCoSR over other methods. Our code is available at https://github.com/jundongchen13/FeCoSR.
\end{abstract}

\begin{CCSXML}
<ccs2012>
 <concept>
  <concept_id>00000000.0000000.0000000</concept_id>
  <concept_desc>Do Not Use This Code, Generate the Correct Terms for Your Paper</concept_desc>
  <concept_significance>500</concept_significance>
 </concept>
 <concept>
  <concept_id>00000000.00000000.00000000</concept_id>
  <concept_desc>Do Not Use This Code, Generate the Correct Terms for Your Paper</concept_desc>
  <concept_significance>300</concept_significance>
 </concept>
 <concept>
  <concept_id>00000000.00000000.00000000</concept_id>
  <concept_desc>Do Not Use This Code, Generate the Correct Terms for Your Paper</concept_desc>
  <concept_significance>100</concept_significance>
 </concept>
 <concept>
  <concept_id>00000000.00000000.00000000</concept_id>
  <concept_desc>Do Not Use This Code, Generate the Correct Terms for Your Paper</concept_desc>
  <concept_significance>100</concept_significance>
 </concept>
</ccs2012>
\end{CCSXML}

\ccsdesc[500]{Information systems~Recommender systems}
\ccsdesc[500]{Security and privacy~Social aspects of security and privacy}

\keywords{Cross-Market Recommendation, Sequential Recommendation, Federated Recommendation, Federated
 Collaboration}


\maketitle

\section{INTRODUCTION}\label{sec:intro}
Recommender systems~\cite{liu2024multimodal, zhao2024recommender, hu2025modality} play a crucial role in suggesting items of interest to users, generating substantial commercial value for multimedia platforms such as TikTok, Spotify, and Amazon. As these companies continue to expand beyond their domestic markets, cross-market recommendation (CMR) has emerged to improve recommendation performance across multiple markets. Unlike cross-domain recommendation (CDR)~\cite{zang2022survey, zhu2022personalized}, CMR exhibits unique characteristics, including data isolation, non-overlapping users, and market preference heterogeneity, making it a distinct and increasingly important research problem~\cite{bonab2021cross, wang2024pre}. 

Early CMR methods, such as FOREC~\cite{bonab2021cross} and MA~\cite{bhargav2023market}, achieve promising results by collecting geographically isolated data for centralized training. However, the introduction of privacy regulations, \textit{e.g.}, the GDPR in Europe and the CCPA in the United States, renders such centralized approaches impractical. To enable privacy-preserving CMR, adopting the \textbf{one-to-one transfer} paradigm from CDR has become a prevalent solution. For example, CAT-SR~\cite{wang2024pre}, inspired by UniSRec~\cite{hou2022towards}, pretrains a transferable model on source markets to improve recommendation performance in target markets. Despite their effectiveness, we argue that such paradigms face two fundamental challenges: \textbf{CH1. Source degradation}. The source market is required to learn a market-agnostic model to ensure transferability, which can significantly sacrifice its own performance. As illustrated in Figure~\ref{pic:preexp}(a), the models pretrained by UniSRec and CAT-SR show degraded performance on the source market. Even after fine-tuning, they still fail to reach the performance of locally trained models. \textbf{CH2. Negative transfer}. Due to heterogeneous market preferences, pretraining on a single market fails to identify truly market-agnostic knowledge. As a result, source-specific preference patterns are inevitably transferred to the target market, which can even harm its performance.

\begin{figure}[t]
\centering
\includegraphics[width=1.0\columnwidth]{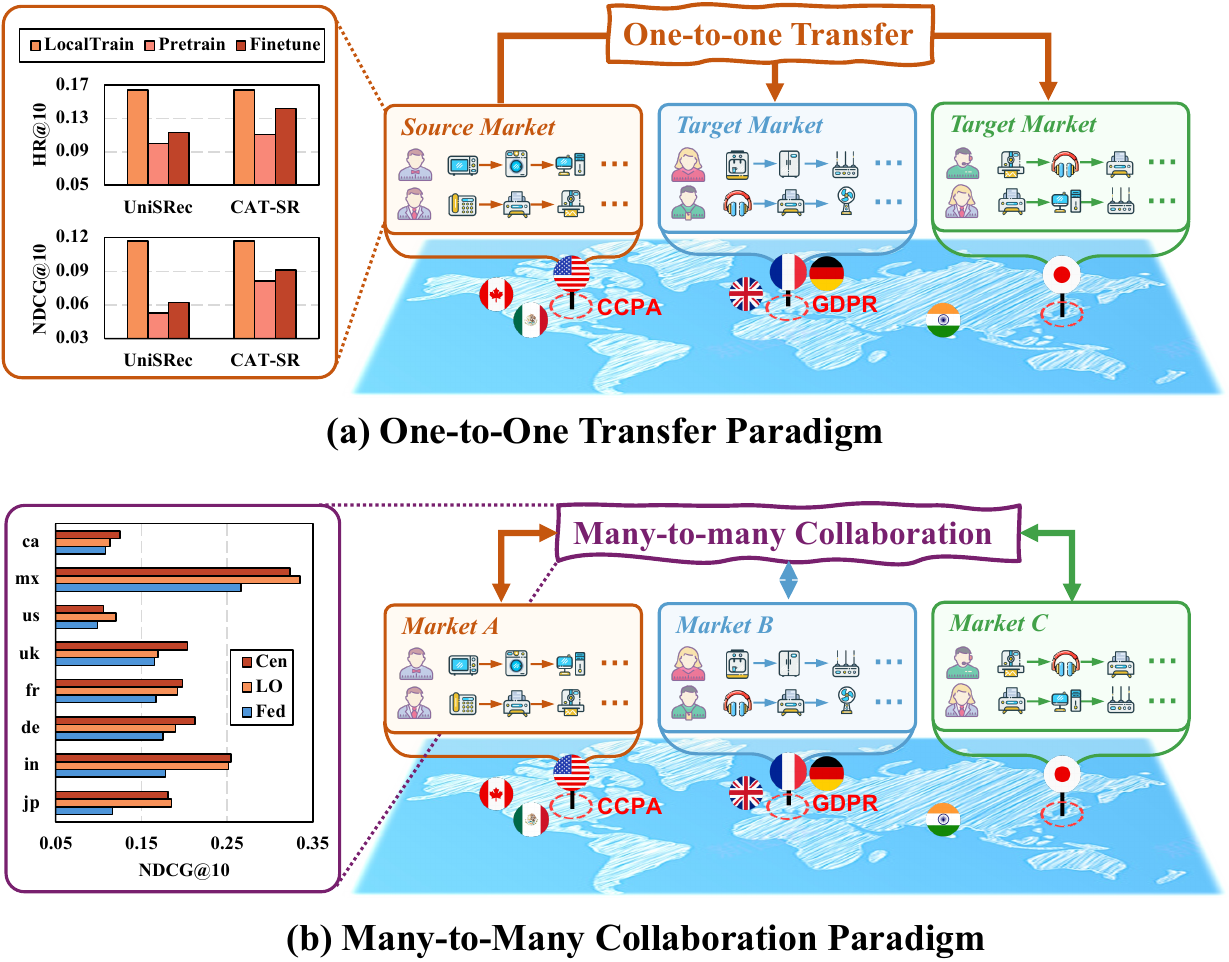} 
\caption{Paradigm comparison. One-to-one transfer suffers from performance degradation on the source market and negative transfer to the target market. Many-to-many collaboration is promising but not well explored by existing methods, motivating our federated collaboration framework.}
\label{pic:preexp}
\end{figure}

To overcome the limitations of the one-to-one transfer paradigm, we explore a \textbf{many-to-many collaboration} paradigm. Specifically, we conduct exploratory experiments comparing two representative collaborative strategies, centralized (Cen) and federated (Fed) learning with FedAvg~\cite{mcmahan2017communication}, along with local-only (LO) training. The results in Figure~\ref{pic:preexp}(b) reveal two key findings: i) Cen can partially exploit cross-market collaboration but is often impractical due to privacy constraints. Compared with LO, it improves some markets, \textit{e.g.}, mx, uk, while degrading others, \textit{e.g.}, us. ii) Although Fed preserves data privacy, the traditional Fed strategy fails to achieve effective collaboration under significant market heterogeneity. These observations motivate us to develop a privacy-preserving collaborative framework for CMR that tackles cross-market heterogeneity and enables effective knowledge sharing across markets.

In this paper, we focus on cross-market sequential recommendation (CMSR) and propose \textbf{FeCoSR}, a \underline{Fe}derated \underline{Co}llaboration framework for cross-market \underline{S}equential \underline{R}ecommendation. To address \textbf{CH1}, we adopt a two-stage strategy. First, all markets collaboratively pretrain transferable behavior-level patterns by leveraging the textual modality via federated optimization. Next, each market performs local fine-tuning to further incorporate the ID modality, capturing item-level preferences. This strategy enables all markets to actively participate and benefit, achieving \textbf{consistent improvements} across markets. For \textbf{CH2}, we theoretically and empirically reveal that the widely used one-hot supervision in vanilla Cross-Entropy (CE), \textit{i.e.}, treating all non–ground-truth items as negative samples, exacerbates market heterogeneity and disturbs federated optimization. To mitigate this, we propose a Semantic Soft Cross-Entropy (S$^2$CE) that replaces one-hot targets with a semantically smoothed supervision. By emphasizing item textual semantics, S$^2$CE encourages the learning of behavior-level patterns and alleviates heterogeneity, facilitating \textbf{robust sharing} across markets. After that, we employ a market-specific adaptation during fine-tuning to capture precise local item-level preferences.

In summary, the main contributions of this work are as follows:

\begin{itemize}[left=0pt]
\item We identify two fundamental challenges in the widely adopted one-to-one transfer paradigm for CMR, \textit{i.e.}, source degradation and negative transfer, and propose a novel two-stage \textbf{federated collaboration} framework to tackle both challenges.

\item In the federated pretraining stage, we theoretically analyze the source of cross-market heterogeneity and introduce a Semantic Soft Cross-Entropy objective to mitigate it, encouraging the learning of shared behavior-level patterns.

\item In the local fine-tuning stage, we design a market-specific adaptation module to capture local item-level preferences, preserving the uniqueness of each market.

\item Extensive experiments on real-world CMR datasets demonstrate that the proposed method effectively overcomes the limitations of existing approaches and achieves superior performance.
\end{itemize}

\section{RELATED WORK}
\subsection{Cross-market Recommendation}
Cross-market recommendation (CMR) has emerged as a novel problem distinct from cross-domain recommendation (CDR), both aiming to mitigate data sparsity. While CDR focuses on transferring knowledge across domains with different feature spaces, CMR emphasizes knowledge sharing across regions with heterogeneous preferences~\cite{bonab2021cross}. Due to the inherent geographic isolation of CMR, users are often non-overlapping, and data cannot be directly shared, making it a particularly challenging task~\cite{wang2024pre}. Existing CMR methods fall into three categories: i) \textbf{Meta-learning methods} such as FOREC~\cite{bonab2021cross}, MAML-CF~\cite{kang2023outlier}, and M$^3$Rec~\cite{cao2022item} pretrain models on multiple markets and fine-tune on target markets. They mainly learn global representations and overlook market-specific characteristics. ii) \textbf{Market-aware modeling methods}, including MA~\cite{bhargav2023market} and Bert4CMF~\cite{hu2024enhancing}, incorporate market-aware components to capture regional patterns. However, these methods above typically require full data sharing, which is often infeasible due to privacy or regulatory constraints. iii) \textbf{Transfer-based methods.} CAT-SR~\cite{wang2024pre} pretrains on a source market and transfers only privacy-insensitive model parameters for target-specific fine-tuning. As discussed in Section~\ref{sec:intro}, this one-to-one transfer paradigm suffers from unfairness and negative transfer. To address these issues, we propose a \textbf{federated collaboration} paradigm, which enables effective cross-market knowledge sharing while preserving user privacy.

\subsection{Federated Recommendation}
Federated recommendation (FR)~\cite{sun2024survey} is a privacy-preserving paradigm for personalized services, where a central challenge lies in handling heterogeneous preferences across clients. Existing approaches can be broadly categorized into two groups: i) \textbf{Cross-user FR}~\cite{zhang2023lightfr, zhang2026transfr, li2026fedau2, qian2025personalized, chen2025beyond}, where each user is treated as a client. From the server perspective, GPFedRec~\cite{zhang2024gpfedrec} performs client-specific aggregation based on model similarity, FedCA~\cite{zhang2026beyond} further incorporates data complementarity, and FedSC~\cite{gui2026federated} adopts stochastic aggregation to preserve diversity in global parameters. From the client perspective, PFedRec~\cite{zhang2023dual} employs dual personalization locally, FedRAP~\cite{li2023federated} augments the global model with an additional local module, and FedEA~\cite{chen2026breaking} selectively absorbs knowledge from the global model. ii) \textbf{Cross-silo FR}~\cite{chen2023win,guo2024prompt,wang2025federated}, where each client corresponds to a group of users. FR-JVE~\cite{li2025efficient} studies collaborative recommendation across platforms within a venture ecosystem. However, it requires overlapping users, which is not applicable to CMR. FedDCSR~\cite{zhang2024feddcsr} addresses feature distribution inconsistency in cross-domain recommendation via decoupled learning, whereas CMR focuses on market preference heterogeneity. Overall, existing FR methods are designed for specific settings and are not well-suited for CMR. By leveraging the privacy-preserving advantages of FR, our framework further enables effective multi-market collaborative training, thereby improving recommendation performance across markets.

\section{METHODOLOGY}\label{sec:methodology}
\subsection{Preliminaries}
\subsubsection{Cross-Market Sequential Recommendation} Let $\mathcal{M}$ denote the set of markets, where each market $m \in \mathcal{M}$ contains a set of users $\mathcal{U}^m$ and items $\mathcal{I}^m$. For each user $u \in \mathcal{U}^m$, we denote the historical interaction sequence as $s_u^m = [i_1^m, i_2^m, \dots, i_{|s_u^m|}^m]$, where $i_k^m \in \mathcal{I}^m$. The dataset is defined as $\mathcal{S}^m = \{ s_u^m \}_{u \in \mathcal{U}^m}$. For each market $m$, the sequential recommendation model is defined as $\Theta^m = \{\mathbf{E}^m, f_\theta^m(\cdot)\}$, where $\mathbf{E}^m \in \mathbb{R}^{|\mathcal{I}^m| \times d}$ denotes the item embedding matrix, with $|\mathcal{I}^m|$ being the number of items and $d$ the embedding dimension. The behavior encoder $f_\theta^m(\cdot)$ maps user's interaction sequence to a representation $\mathbf{h}_u^m = f_\theta^m(s_u^m) \in \mathbb{R}^d$. The market $m$ then computes the scores $\mathbf{r}_u^m = \mathbf{h}_u^m (\mathbf{E}^m)^\top \in \mathbb{R}^{|\mathcal{I}^m|}$ to recommend top-k items for each user $u$. The goal of CMR is to leverage knowledge across markets to learn optimal models $\{\Theta^m\}_{m \in \mathcal{M}}$ for all markets.

\subsubsection{One-to-One Transfer} Due to privacy constraints and non-overlapping users across markets, existing CMR methods typically adopt the one-to-one transfer paradigm, which pretrains a transferable model $\Theta^{m_S}_{\text{pre}}$ on a source market $m_S$ and adapts the learned parameters to a target market $m_T$ by market-specific fine-tuning. The process can be formalized as:
\begin{equation}
    \Theta^{m_S}_{\text{pre}} = \arg\min_{\Theta} \mathcal{L}_{\text{pre}}(\Theta; \mathcal{S}^{m_S}),
\end{equation}
\begin{equation}
    \Theta^{m_T} = \arg\min_{\Theta} \mathcal{L}_{\text{ft}}(\Theta^{m_S}_{\text{pre}}; \mathcal{S}^{m_T}),
\end{equation}
where $\mathcal{L}_{\text{pre}}$ and $\mathcal{L}_{\text{ft}}$ denote the pretraining and fine-tuning objectives on the source and target markets, respectively. Representative methods such as UniSRec~\cite{hou2022towards} and CAT-SR~\cite{wang2024pre} adopt fixed semantic item embeddings as $\mathbf{E}$ and focus on learning transferable behavior encoders $f_\theta(\cdot)$ to facilitate cross-market transfer. However, as analyzed in Section~\ref{sec:intro}, such one-to-one transfer methods suffer from source degradation and negative transfer.

\subsubsection{Federated Collaboration} 
To promote consistent improvements and robust sharing, we propose a federated collaboration framework, consisting of federated pretraining across markets and local fine-tuning for each market. Formally, we have:
\begin{equation}\label{eq:fed_col}
    \Theta^{\text{fed}}_{\text{pre}} = \arg\min_{\Theta} \sum_{m \in \mathcal{M}} \alpha_m \, \mathcal{L}_{\text{pre}}(\Theta; \mathcal{S}^m),
\end{equation}
\begin{equation}\label{eq:local_ft}
    \Theta^m = \arg\min_{\Theta} \mathcal{L}_{\text{ft}}(\Theta^{\text{fed}}_{\text{pre}}; \mathcal{S}^m), \,\ \forall m \in \mathcal{M},
\end{equation}
where $\alpha_m$ is the weight for market $m$ during federated optimization. Eq.~(\ref{eq:fed_col}) enables all markets to collaboratively learn and benefit from shared behavior-level patterns while preserving data privacy. Eq.~(\ref{eq:local_ft}) ensures that each market can further capture its market-specific item-level preferences.

\subsection{Overview}
In this paper, we consider a company–market architecture where the company serves as the central server and each market acts as a client. The overall workflow of the framework is illustrated in Figure~\ref{pic:framework}. Our framework consists of two stages: the first stage leverages textual modality to learn shared behavior-level patterns via federated collaborative training, while the second stage introduces ID modality to capture market-specific item-level preferences.\\
\textbf{I. Federated  Collaborative Pretraining}. (1) \textit{Global Initialization.} The server first initializes global semantic item embeddings $\mathbf{E}$ using a pretrained text encoder over item descriptions, providing a shared semantic space across markets for behavioral pattern learning. (2) \textit{Local Pretraining.} With $\mathbf{E}^m=\mathbf{E}[\mathcal{I}^m]$ fixed, each market trains its local sequential encoder $f_\theta^m(\cdot)$ using \textit{semantic soft cross-entropy (S$^2$CE)} objective to learn behavior-level patterns. (3) \textit{Global Aggregation.} The local behavior encoders $\{f_\theta^m(\cdot)\}_{m\in\mathcal{M}}$ are uploaded to the server and aggregated to obtain a global encoder $f_\theta^{agg}(\cdot)$, which is redistributed to markets to replace $f_\theta^m(\cdot)$ for the next round.\\
\textbf{II. Local Market-specific Fine-tuning}. (4) After $R$ rounds, the global encoder $f^{agg}_\theta(\cdot)$ is fixed at each market. Each market then performs low-rank fine-tuning with an ID adapter to incorporate market-specific item-level preferences, optimized by the vanilla cross-entropy (CE) objective.
\begin{figure*}[ht]
\centering
\includegraphics[width=0.92\textwidth]{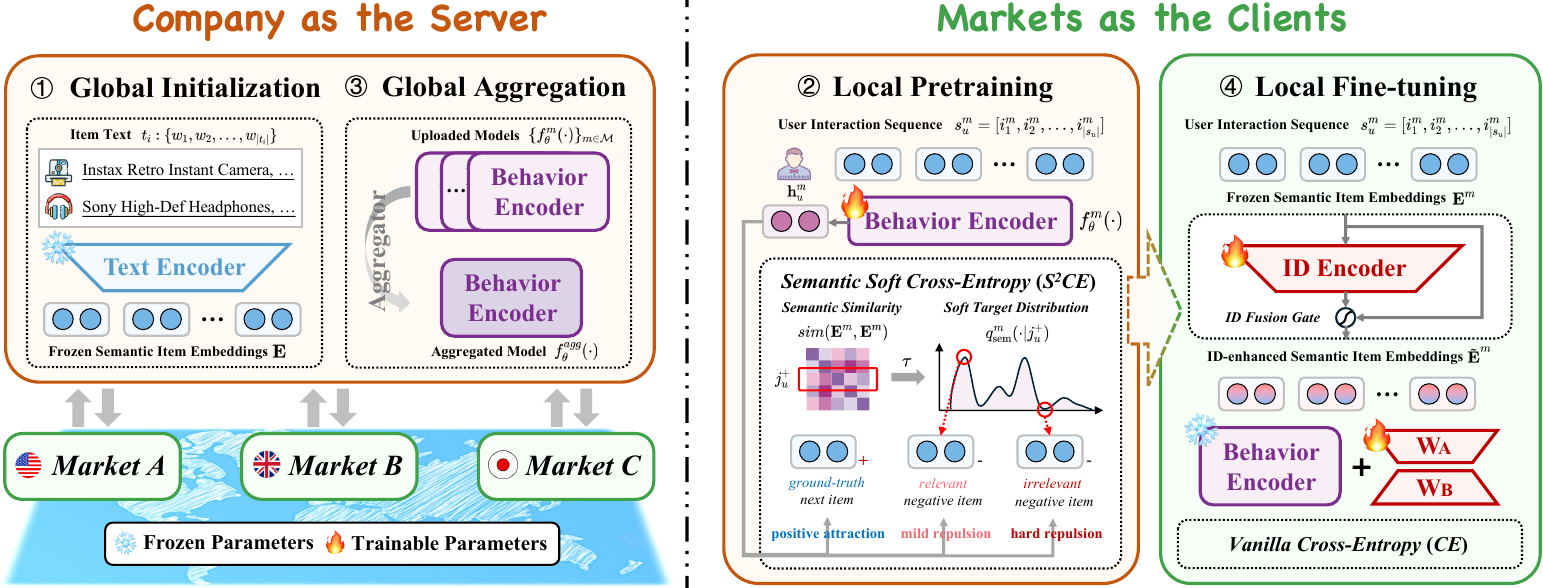} 
\caption{Overview of FeCoSR for CMR. I. (Orange) Federated pretraining with textual modality for cross-market behavior-level pattern learning; II. (Green) Local fine-tuning with ID modality for market-specific item-level preference modeling.}
\label{pic:framework}
\end{figure*}

\subsection{Federated Collaborative Pretraining}
\subsubsection{Global Initialization} To facilitate behavioral patterns sharing across markets, we leverage the textual modality to construct a shared representation space across markets~\cite{hou2022towards, wang2024pre}. Specifically, we utilize a pretrained text encoder, \textit{e.g.}, BERT~\cite{devlin2019bert}, to encode item textual descriptions into dense semantic vectors. For each item $i$, its textual content is denoted as $t_i=\{w_1,w_2,\dots,w_{|t_i|}\}$, where $w_k$ is the $k$-th word. The item embedding is obtained as:
\begin{equation}\label{eq:initialization}
    \mathbf{e}_i = \text{TextEncoder}([[\text{CLS}]; w_1, w_2, \dots, w_{|t_i|}]) \in \mathbb{R}^d,
\end{equation}
where the final hidden vector corresponding the first token $[\text{CLS}]$ is taken as the item embedding, and “[;]” denotes the concatenation operation. The server then broadcasts the semantic item embeddings $\mathbf{E} = \{\mathbf{e}_i\}_{i \in \mathcal{I}^{m_1} \cup \cdots \cup \mathcal{I}^{|\mathcal{M}|}}$ 
to all markets. During federated pretraining, each client keeps $\mathbf{E}^m = \mathbf{E}[\mathcal{I}^m]$ frozen and focuses on collaboratively optimizing the behavior encoder $f_\theta^m(\cdot)$.

\subsubsection{Local Pretraining} Generally, local sequential recommendation models are trained with the vanilla cross-entropy (CE) objective. This loss adopts one-hot supervision, where the ground-truth next item ${j_u^+}$ is assigned probability 1, while all other items are treated as negatives with probability 0. Formally, the CE loss is defined as:
\begin{equation}\label{eq:ce}
\mathcal{L}_{\text{CE}} = - \sum_{u \in \mathcal{U}^m} \log p_\theta^m(j_u^+|s_u^m),
\end{equation}
\begin{equation}
p_\theta^m(j|s_u^m) = 
\frac{\exp(\mathbf{h}_u^m \cdot \mathbf{e}^m_j)}
{\sum_{k \in \mathcal{I}^m} \exp(\mathbf{h}_u^m \cdot \mathbf{e}^m_k)},
\end{equation}
where $p_\theta^m(j|s_u^m)$ denotes the probability of item $j$ for user $u$ in market $m$. This formulation encourages the user representation $\mathbf{h}_u^m = f_\theta^m(s_u^m)$ to be pulled toward the ground-truth item $\mathbf{e}_{j_u^+}$ while being pushed away from other items, thereby enabling accurate next-item prediction. However, this training strategy overly emphasizes the ground-truth item, causing the model to focus on precise local preferences. While similar behavioral patterns, \textit{e.g.}, purchasing a charger after buying a phone, exist across markets, the specific item preferences differ, \textit{e.g.}, \textit{Apple} in market A versus \textit{Samsung} in market B~\cite{mcauley2015image}. Under vanilla CE, this results in heterogeneity across markets, which disturbs effective collaboration under federated optimization, as formalized in Proposition~\ref{pro:ib_vce}.

\begin{proposition}[Market Heterogeneity under Vanilla CE]\label{pro:ib_vce}
In cross-market recommendation, shared behavior-level patterns often exist across markets~\cite{mcauley2015image}. However, local training with the vanilla CE objective drives each market toward one-hot predictions on its item-level preferences, amplifying heterogeneity across markets.
\end{proposition}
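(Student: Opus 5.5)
We make the informal statement precise in a simple model and then compare markets through their CE minimizers. Fix a shared behavior context $c$ (e.g., ``just bought a phone''), and let market $m$ have an empirical next-item distribution $q^m(\cdot\mid c)$ over $\mathcal{I}^m$. Because $\mathbf{E}$ is frozen and shared, two markets differ only through these targets. Pooling the users with context $c$, the local CE objective becomes
\[
\mathcal{L}^m_{\text{CE}}(\mathbf{h}) = -\sum_{j} q^m(j\mid c)\log p^m(j\mid \mathbf{h}).
\]
We model item-level preference heterogeneity by letting $q^m$ concentrate on market-specific items, say $j^m$ (\emph{Apple} in market A, \emph{Samsung} in market B), with $j^m\neq j^{m'}$. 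The two items are semantically close, so $\mathbf{e}_{j^m}\approx \mathbf{e}_{j^{m'}}$ up to a small perturbation.

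\textbf{Step 1: the CE minimizer targets the one-hot label.} We use the standard identity $\mathcal{L}^m_{\text{CE}}=H(q^m)+\mathrm{KL}(q^m\,\|\,p^m)$. The infimum over $\mathbf{h}$ is therefore approached when $p^m\to q^m$. In the one-hot limit $q^m=\delta_{j^m}$, the loss is decreased by driving the margin $\mathbf{h}\cdot(\mathbf{e}_{j^m}-\mathbf{e}_k)\to\infty$ for all $k\neq j^m$. To see where $\mathbf{h}$ is pushed, we compute the gradient
\[
\nabla_{\mathbf{h}}\mathcal{L}^m_{\text{CE}} = -\Bigl(\mathbf{e}_{j^m}-\sum_k p^m(k)\mathbf{e}_k\Bigr).
\]
This shows that the gradient-flow direction of $\mathbf{h}^m$ is governed by the component of $\mathbf{e}_{j^m}$ that is orthogonal to the semantically similar items. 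Items sharing the behavior pattern (other chargers) are explicitly pushed away, with weight $p^m(k)$.

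\textbf{Step 2: the gradients of the two markets conflict.} We compare the update directions $g^m=\nabla\mathcal{L}^m$ and $g^{m'}$ at a common $\mathbf{h}$ (the aggregated model). Write $\mathbf{e}_{j^{m'}}=\mathbf{e}_{j^m}+\boldsymbol{\epsilon}$ and assume both items lie in the candidate set, as in the shared semantic space. Near convergence of market $m$, $p^m(j^m)\approx 1$, so $g^m$ contains the term $+p^m(j^{m'})\mathbf{e}_{j^{m'}}$, which pushes away market $m'$'s target, and symmetrically. We then show that
\[
\|g^m-g^{m'}\| \ge c\,\|\boldsymbol{\epsilon}\| \cdot \frac{\text{margin}}{\text{temperature}}.
\]
We also show that $\langle g^m,g^{m'}\rangle$ becomes negative along $\boldsymbol{\epsilon}$ as local training sharpens $p^m$. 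The measure of heterogeneity, either the gradient dissimilarity $\sum_m\alpha_m\|g^m-\bar g\|^2$ used in FedAvg analyses or $\|\mathbf{h}^{m,*}-\mathbf{h}^{m',*}\|$, therefore grows with the sharpness of the local predictions. It also does not vanish even though the behavior-level component $\mathbf{e}_{j^m}+\mathbf{e}_{j^{m'}}$ is shared. This is the sense of ``amplifying'': the heterogeneity is larger than under any smoothed target that spreads mass over semantically similar items. For such a target, $q^m$ and $q^{m'}$ are close in total variation whenever $\boldsymbol{\epsilon}$ is small, and Step 1 then gives close minimizers.

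\textbf{Step 3: consequence for aggregation.} Finally, we invoke the standard bounded-dissimilarity condition of FedAvg. A larger dissimilarity enlarges the client-drift term in the convergence bound. This formalizes the claim that CE-induced heterogeneity disturbs federated optimization.

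The main obstacle is Step 2. The CE minimizer under a one-hot target is not attained, because $\mathbf{h}$ diverges, so ``distance between market optima'' is ill-posed. We plan to handle this by fixing a norm ball $\|\mathbf{h}\|\le B$, or equivalently a softmax temperature, and showing that the optimal directions $\mathbf{h}^{m,*}/B$ separate monotonically in $B$. We would first check the two-item case, where the optimum is explicit ($\mathbf{h}^*\propto\mathbf{e}_{j^m}-\mathbf{e}_{j^{m'}}$ as $B\to\infty$). In that case the two markets' optima become antipodal along $\boldsymbol{\epsilon}$, which is maximal heterogeneity despite the near-identical semantics.
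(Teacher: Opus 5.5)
Your Step 2, which is where the ``amplification'' has to come from, fails inside your own model. You assume the markets share the frozen $\mathbf{E}$ and the candidate set, and differ only in their targets. So at any common $\mathbf{h}$ the two softmaxes coincide, $p^m(\cdot\mid\mathbf{h})=p^{m'}(\cdot\mid\mathbf{h})=p$, and your gradient formula gives exactly $g^m-g^{m'}=\mathbf{e}_{j^{m'}}-\mathbf{e}_{j^m}=\boldsymbol{\epsilon}$. A softmax temperature $T$ only rescales this by $1/T$, and differentiating in the encoder parameters only maps it by the same $J_\theta^\top$. The difference is independent of $p$, so no bound $\|g^m-g^{m'}\|\ge c\|\boldsymbol{\epsilon}\|\cdot\mathrm{margin}/\mathrm{temperature}$ can hold at large margins. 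Likewise the FedAvg dissimilarity $\sum_m\alpha_m\|g^m-\bar g\|^2=\|\boldsymbol{\epsilon}\|^2/4$ (two markets, equal weights) is constant in $\mathbf{h}$ and vanishes with the semantic gap. That is the opposite of ``grows with sharpness'' and ``does not vanish,'' so the bounded-dissimilarity argument of Step 3 would, if anything, suggest CE heterogeneity is mild. The sign conflict along $\hat{\boldsymbol{\epsilon}}$ is also a property of the shared point alone. $\langle g^m,\hat{\boldsymbol{\epsilon}}\rangle$ and $\langle g^{m'},\hat{\boldsymbol{\epsilon}}\rangle$ have opposite signs exactly when the projection of $\bar{\mathbf{e}}_p=\sum_k p(k)\mathbf{e}_k$ onto $\hat{\boldsymbol{\epsilon}}$ lies between those of $\mathbf{e}_{j^m}$ and $\mathbf{e}_{j^{m'}}$. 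Invoking ``$p^m(j^m)\approx 1$ near convergence of market $m$'' conflates the aggregated point with market $m$'s own local optimum. Under one-hot CE the heterogeneity lives in the local optima and in the output distributions, not in gradients at a common point.

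Your fallback is not yet a proof either. The two-item check is degenerate: the loss depends only on $\mathbf{h}\cdot\boldsymbol{\epsilon}$, so the shared behavior component is invisible and the constrained optima are antipodal for every $B$. Nothing ``separates monotonically,'' and monotonicity in the multi-item case is asserted rather than shown. What would carry this route is the max-margin limit of norm-constrained softmax regression: the direction of the minimizer over $\|\mathbf{h}\|\le B$ converges to $\arg\max_{\|\mathbf{h}\|\le1}\min_{k\neq j^m}\mathbf{h}\cdot(\mathbf{e}_{j^m}-\mathbf{e}_k)$. Combine this with the fact that the margin against the rival item is at most $|\mathbf{h}\cdot\hat{\boldsymbol{\epsilon}}|\,\|\boldsymbol{\epsilon}\|$. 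Then whenever both max-min margins are at least $c\|\boldsymbol{\epsilon}\|$, the two markets' limit directions carry opposite-signed components of size at least $c$ along $\hat{\boldsymbol{\epsilon}}$, however small $\|\boldsymbol{\epsilon}\|$ is.

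Your comparison with smoothed targets also has a hole. Step 1 gives $p^*\approx q$, i.e.\ closeness of distributions, not of minimizers. Since the split between $j^m$ and $j^{m'}$ is controlled only by $\mathbf{h}\cdot\boldsymbol{\epsilon}$, a TV gap of order $\|\boldsymbol{\epsilon}\|$ can leave the minimizer sets $O(1)$ apart. The claim works for S$^2$CE only because, with unit-norm embeddings, its target is realized exactly at $\mathbf{h}=\mathbf{e}_{j}/\tau$.

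The paper avoids all of this by measuring heterogeneity directly in output space. Near CE-optimality, $p^{m_1}(a\mid s_u^{m_1})\ge1-\epsilon$ while $p^{m_2}(a\mid s_u^{m_2})\le\epsilon$. Hence $\mathrm{KL}(p^{m_1}\|p^{m_2})\ge(1-\epsilon)\log\frac{1-\epsilon}{\epsilon}$ up to a vanishing term, which diverges as $\epsilon\to0$; the appendix adds that the markets' $\varepsilon$-optimal sets become disjoint. Your Step 1 already supplies the concentration, provided $\mathbf{e}_{j^m}\notin\mathrm{conv}\{\mathbf{e}_k\}_{k\neq j^m}$ (otherwise $p(j^m)\le 1/2$ for every $\mathbf{h}$). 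So this one-line KL bound would complete your argument.
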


\begin{proof}
Under vanilla CE optimization, the model maximizes the likelihood of the ground-truth item, that is:
\begin{equation}
    \min_\theta -\log p_\theta^m(j_u^+|s_u^m) \;\Longleftrightarrow\; \max_\theta p_\theta^m(j_u^+|s_u^m).
\end{equation}
Consider two users in markets $m_1$ and $m_2$ with ground-truth items $a$ and $b$. When the model approaches optimality, the predicted probabilities satisfy:
\begin{equation}\label{eq:p1_pos}
    p_\theta^{m_1}(a|s_u^{m_1})\ge1-\epsilon, \,\,\ p_\theta^{m_2}(b|s_u^{m_2})\ge1-\epsilon,
\end{equation}
with $\epsilon\to 0$. Thus, the remaining probabilities are bounded by:
\begin{equation}\label{eq:p1_neg}
    \sum_{j\neq a}p_\theta^{m_1}(j|s_u^{m_1})\le\epsilon,\,\,\ p_\theta^{m_2}(a|s_u^{m_2})\le\epsilon.
\end{equation}
The KL divergence between the two predictive distributions is:
\begin{equation}
    \mathrm{KL}(p_\theta^{m_1}\|p_\theta^{m_2})=\sum_j p_\theta^{m_1}(j|s_u^{m_1})\log\frac{p_\theta^{m_1}(j|s_u^{m_1})}{p_\theta^{m_2}(j|s_u^{m_2})}\\
\end{equation}
\begin{equation}\label{eq:kl_ce}
    =p_\theta^{m_1}(a|s_u^{m_1})\log\frac{p_\theta^{m_1}(a|s_u^{m_1})}{p_\theta^{m_2}(a|s_u^{m_2})}+\sum_{j\neq a}p_\theta^{m_1}(j|s_u^{m_1})\log\frac{p_\theta^{m_1}(j|s_u^{m_1})}{p_\theta^{m_2}(j|s_u^{m_2})}.
\end{equation}
Using Eqs.~(\ref{eq:p1_pos})–(\ref{eq:p1_neg}), the first term satisfies:
\begin{equation}
    p_\theta^{m_1}(a|s_u^{m_1})\log\frac{p_\theta^{m_1}(a|s_u^{m_1})}{p_\theta^{m_2}(a|s_u^{m_2})}\ge(1-\epsilon)\log\frac{1-\epsilon}{\epsilon},
\end{equation}
and the remaining terms are bounded by $O(\epsilon)$. Thus:
\begin{equation}
    \mathrm{KL}(p_\theta^{m_1}\|p_\theta^{m_2})\ge(1-\epsilon)\log\frac{1-\epsilon}{\epsilon}+O(\epsilon)\approx\log(1/\epsilon),
\end{equation}
showing that as $\epsilon\to0$, the predictive distributions diverge.

This example illustrates that item-level one-hot supervision induces substantial distributional differences across markets. Such heterogeneity makes it difficult to achieve effective collaboration through federated optimization~\cite{li2019convergence, li2020federated}.
\end{proof}

To mitigate the heterogeneity induced by vanilla CE, we propose a \textit{\textbf{Semantic Soft Cross-Entropy (S$^2$CE)}} loss, which converts the supervision from item-level to behavior-level learning. It encourages the model to respect items that are semantically close to the ground-truth item, while hardly pushing away dissimilar negative samples. Formally, the S$^2$CE loss is defined as:
\begin{equation}\label{eq:s2ce}
\mathcal{L}_{\text{S$^2$CE}}=- \sum_{u \in \mathcal{U}^m}\sum_{j \in \mathcal{I}^m}q^m_\text{sem}(j|j_u^+) \log p_\theta^m(j|s_u^m),
\end{equation}
where $q_\text{sem}(j|j_u^+)$ is a soft target distribution over all items in market $m$, reflecting their semantic similarity to the ground-truth next item $j_u^+$. Specifically, we define:
\begin{equation}
q^m_\text{sem}(j|j_u^+) = \frac{\exp(sim(\mathbf{e}_j, \mathbf{e}_{j_u^+}) / \tau)}{\sum_{k \in \mathcal{I}^m} \exp(sim(\mathbf{e}_k, \mathbf{e}_{j_u^+}) / \tau)},
\end{equation}
where $sim(\mathbf{e}_j, \mathbf{e}_{j_u^+})$ denotes the cosine similarity between the frozen semantic item embeddings, and $\tau$ is a temperature hyperparameter that controls the softness of the distribution.

\begin{proposition}[Heterogeneity Mitigation under S$^2$CE]
The S$^2$CE loss encourages the model to tolerate items that are semantically similar to the ground-truth. This supervision promotes the learning of shared behavior-level patterns and effectively mitigates the heterogeneity caused by item-level preferences across markets.
\end{proposition}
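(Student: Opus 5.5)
We mirror the two-market example used in the proof of Proposition~\ref{pro:ib_vce}, replacing one-hot supervision with the soft target $q_{\text{sem}}$ and showing that the divergence between the two markets' predictive distributions stays bounded instead of growing like $\log(1/\epsilon)$. We treat the statement as a comparison with the CE bound rather than a general convergence claim.

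\textbf{Setup.} Consider users in $m_1$ and $m_2$ whose ground-truth items $a$ and $b$ differ in identity but realize the same behavior-level pattern, such as chargers of two brands. We encode this as semantic closeness, $\|\mathbf{e}_a-\mathbf{e}_b\|\le\delta$. For simplicity we first let both markets share the item set $\mathcal{I}$ (or restrict to the overlap) and later remark on differing supports. The S$^2$CE objective in Eq.~(\ref{eq:s2ce}) is, up to the constant entropy $H(q_{\text{sem}})$, equal to $\mathrm{KL}(q_{\text{sem}}(\cdot|j_u^+)\,\|\,p_\theta^m)$. Its minimizer over the probability simplex is therefore $p_\theta^{m}(\cdot|s_u^m)=q_{\text{sem}}(\cdot|j_u^+)$. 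Near optimality we take $p_\theta^{m_1}\approx q_{\text{sem}}(\cdot|a)$ and $p_\theta^{m_2}\approx q_{\text{sem}}(\cdot|b)$, formalized as $\mathrm{KL}(q(\cdot|a)\|p^{m_1})\le\epsilon$, and likewise for $m_2$.

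\textbf{Main steps.} First, we bound the log-ratio of the two soft targets. Write $c_k=\cos(\mathbf{e}_k,\mathbf{e}_a)-\cos(\mathbf{e}_k,\mathbf{e}_b)$. Each $|c_k|\le L\delta$, since cosine similarity is Lipschitz away from the origin. This gives
\[
\left|\log\frac{q(j|a)}{q(j|b)}\right|\le |c_j|/\tau+\log\frac{\sum_k e^{\cos(\mathbf{e}_k,\mathbf{e}_b)/\tau}e^{c_k/\tau}}{\sum_k e^{\cos(\mathbf{e}_k,\mathbf{e}_b)/\tau}}\le 2L\delta/\tau .
\]
Hence $\mathrm{KL}(q(\cdot|a)\|q(\cdot|b))\le 2L\delta/\tau$, and this holds regardless of how distinct $a$ and $b$ are as IDs. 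Even for unrelated items the cosine range gives the uniform bound $4/\tau$, which is finite, unlike $\log(1/\epsilon)$. Second, we transfer this bound to the learned distributions by adding the two $\epsilon$-approximation errors. We would use Pinsker's inequality and the fact that every softmax output of $q$ has mass at least $e^{-2/\tau}/|\mathcal{I}|$, which lower-bounds the denominators. The result is
\[
\mathrm{KL}(p^{m_1}\|p^{m_2})\le 2L\delta/\tau+O(\sqrt{\epsilon})\,C_\tau .
\]
Third, we contrast this with Proposition~\ref{pro:ib_vce}. There the divergence blows up as $\epsilon\to0$, whereas here it converges to a quantity governed by semantic distance and $\tau$. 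Consequently the local optima, and therefore the local encoder updates, are closer across markets. This is the reduced client drift that federated convergence analyses~\cite{li2019convergence, li2020federated} require. As a consistency check, letting $\tau\to0$ recovers one-hot targets and the CE blow-up.

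\textbf{Main obstacle.} The hardest step is the second one, relating near-optimality of the loss to closeness of the distributions with KL in the direction $\mathrm{KL}(p^{m_1}\|p^{m_2})$. That direction needs a lower bound on $p^{m_2}$. We would get it from full support of $q$ together with the approximation assumption, and if that fails we would fall back on total variation or symmetric KL. A further difficulty is that the markets have different item sets $\mathcal{I}^{m_1}\neq\mathcal{I}^{m_2}$. We would handle this by comparing distributions restricted to shared items, or by comparing the induced targets in embedding space, $\sum_j p(j)\mathbf{e}_j$. The argument is then stated as an illustrative bound, in the same spirit as the proof of Proposition~\ref{pro:ib_vce}.
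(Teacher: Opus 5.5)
Your argument is correct, but it rests on a different key estimate than the paper's. The paper's headline bound (main text, and Step 2 of the appendix) keeps $p^{m_1}$ concentrated on $a$ as in the CE analysis. It replaces only $p^{m_2}$ by its soft target and uses the single lower bound $q^{m_2}_{\mathrm{sem}}(a|b)\ge e^{(\sigma-1)/\tau}/|\mathcal{I}^{m_2}|$ to get $\mathrm{KL}(p^{m_1}\|p^{m_2})\le\log|\mathcal{I}^{m_2}|+(1-\sigma)/\tau$. That needs no regularity of the embeddings and holds for any pair $a,b$. However, it is essentially a full-support statement, and because of the asymmetric setup the bound never drops below $\log|\mathcal{I}^{m_2}|$, even when $a$ and $b$ are semantically identical. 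Step 3 of the appendix does compare the two soft targets, but only through the cosine range (with a common item set, $\log C_0=4/\tau$, exactly your bound for unrelated items). It then reaches the learned distributions by substituting $p\approx q$ and chaining KL terms through a triangle-type inequality that KL does not satisfy in general.

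Your symmetric treatment with the Lipschitz refinement gives $2L\delta/\tau$, which vanishes as $\delta\to0$. It therefore captures the intended mechanism (item-level differences between semantically equivalent targets are smoothed away) rather than mere boundedness, and your Pinsker step is a valid replacement for that chaining. The obstacle you flag is already handled by your own ingredients. With $\mu=e^{-2/\tau}/|\mathcal{I}|$ and $\epsilon\le\mu^2/8$, Pinsker gives $|p(j)-q(j)|\le\sqrt{2\epsilon}\le\mu/2$ for every $j$. Hence $\log\frac{p^{m_1}(j)}{p^{m_2}(j)}\le\log\frac{q(j|a)}{q(j|b)}+O(\sqrt{\epsilon}/\mu)$ pointwise, and $\mathrm{KL}(p^{m_1}\|p^{m_2})$ is a $p^{m_1}$-average of these terms, so no fallback to total variation is needed.

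The price is hypotheses the paper avoids: a common item set, and a finite $L$. The latter is automatic if $\delta$ is measured between normalized embeddings, where $L=1$ and $\delta=\sqrt{2(1-\sigma)}$. There is also a constant $C_\tau$ of order $|\mathcal{I}|e^{2/\tau}$, astronomically large at $\tau=0.05$, so your bound is informative only as $\epsilon\to0$ at fixed $\tau$.

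Finally, the appendix goes on to an approximate intersection of the $\varepsilon$-optimal sets, which you only gesture at via client drift. The same pointwise estimate gives $\mathrm{KL}(q(\cdot|b)\|p^{m_1})\le 2L\delta/\tau+O(\sqrt{\epsilon})\,C_\tau$, which is the inequality that step needs. Carrying it over to the same $\theta$ evaluated on $m_2$'s sequences, though, still requires assuming the encoder maps the two behavior-similar sequences to nearby representations. The paper also leaves that assumption implicit.
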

\begin{proof}
Following the same setting and notation as in the previous proof, we denote the semantic similarity between items $a$ and $b$ as $\sigma=sim(\mathbf e_a,\mathbf e_b)\in[-1,1]$. Under the S$^2$CE loss the soft target probability assigned to item $a$ in market $m_2$ is:
\begin{equation}\label{eq:sem_s2ce}
\begin{split}
q_\text{sem}^{m_2}(a|b) & = \frac{\exp(sim(\mathbf e_a,\mathbf e_b) / \tau)}{\sum_{k \neq b} \exp(sim(\mathbf{e}_k^\top \mathbf{e}_b) / \tau) + \exp(sim(\mathbf{e}_b^\top \mathbf{e}_b) / \tau)} \\
& \ge \frac{\exp(\sigma / \tau)}{|\mathcal{I}^{m_2}| \cdot \exp(1 / \tau)} 
= \frac{\exp({\sigma-1}/{\tau})}{|\mathcal{I}^{m_2}|}.
\end{split}
\end{equation}
Thus, the soft target assigns a nonzero probability mass to item $a$ in market $m_2$, which depends on the semantic similarity $\sigma$ and the temperature parameter $\tau$.\\
The S$^2$CE objective in Eq.~(\ref{eq:s2ce}) can be rewritten as:
\begin{equation}
    \mathcal L_{S^2CE}=\mathrm{KL}\big(q^m_{\text{sem}}(\cdot|j_u^+)\|p_\theta^m(\cdot|s_u^m)\big)+H(q^m_{\text{sem}}),
\end{equation}
where $H(q^m_{\text{sem}})$ is independent of the model parameters. Therefore, minimizing the S$^2$CE loss is equivalent to:
\begin{equation}
    \arg\min_\theta \mathrm{KL}\big(q^m_{\text{sem}}(\cdot|j_u^+)\,\|\,p_\theta^m(\cdot|s_u^m)\big),
\end{equation}
and at optimality, the distribution satisfies $p_\theta^m(j|s_u^m)\approx q^m_{\text{sem}}(j|j_u^+)$.\\
Now consider the KL divergence between the predictive distributions of the two markets in Eq.~(\ref{eq:kl_ce}). Under S$^2$CE, we obtain:
\begin{equation}
\begin{split}
    \mathrm{KL}(p_\theta^{m_1}\|p_\theta^{m_2}) &= p_\theta^{m_1}(a|s_u^{m_1}) \log \frac{p_\theta^{m_1}(a|s_u^{m_1})}{q_\text{sem}^{m_2}(a|b)} + O(\epsilon)\\
    & \le \log\frac{1}{q_\text{sem}^{m_2}(a|b)} + O(\epsilon).\\
\end{split}
\end{equation}
Ignoring the $O(\epsilon)$ term and substituting the bound in Eq.~(\ref{eq:sem_s2ce}):
\begin{equation}\label{eq:kl_s2ce}
    \mathrm{KL}(p_\theta^{m_1}\|p_\theta^{m_2})\le \log |\mathcal{I}^{m_2}| + \frac{1-\sigma}{\tau}.
\end{equation}
As training in each market approaches optimality, \textit{i.e.}, $\epsilon\to0$. Comparing Eq.~(\ref{eq:kl_ce}) and Eq.~(\ref{eq:kl_s2ce}), we have:
\begin{equation}
    \mathrm{KL}_{S^2CE}(p_\theta^{m_1}\|p_\theta^{m_2})\ll\mathrm{KL}_{CE}(p_\theta^{m_1}\|p_\theta^{m_2}).
\end{equation}

This result shows that S$^2$CE introduces semantic smoothing across markets, emphasizing behavior-level collaboration rather than item-level supervision, which mitigates market heterogeneity and facilitates more effective federated optimization.
\end{proof}

\subsubsection{Global Aggregation} 
When locally pretraining with S$^2$CE, each market focuses on learning behavior-level patterns. After that, the behavior encoders $\{f_\theta^m(\cdot)\}_{m\in\mathcal{M}}$ are uploaded to the server for global aggregation. Formally, the aggregated model is obtained as:
\begin{equation}\label{eq:aggregation}
    f_\theta^{agg}(\cdot) = \sum_{m \in \mathcal{M}} \alpha_m \, f_\theta^m(\cdot),
\end{equation}
where $\alpha_m = 1/|\mathcal{M}|$ denotes the aggregation weight following the widely used FedAvg optimization~\cite{mcmahan2017communication}. In our implementation, the backbone model is SASRec~\cite{kang2018self}, and aggregation is primarily performed on the attention-related parameters, which have been shown to capture the core sequential patterns~\cite{vaswani2017attention, voita2019analyzing, wang2024pre}. After aggregation, the global model $f_\theta^{agg}(\cdot)$ is broadcast back to all markets for the next round of local pretraining.

\subsection{Local Market-specific Fine-tuning}
Through the federated pretraining stage, markets collaboratively learn to predict the semantics of the next item based on the textual modality, yielding a global encoder $f_\theta^{agg}(\cdot)$ that captures market-agnostic behavior-level patterns. Building upon this, we perform parameter-efficient adaptation of $\Theta^m = \{f_\theta^{agg}(\cdot), \mathbf{E}^m\}$ to capture market-specific item-level preferences.

\noindent \textbf{Behavior Encoder.} To enable efficient adaptation while preserving the shared behavioral knowledge, we adopt an efficient strategy by applying low-rank adaptation LoRA~\cite{hu2022lora} to the frozen global encoder. Specifically, the market-specific encoder is defined as $f_\theta^{m}(\cdot) = f_\theta^{agg}(\cdot) + \mathbf W_A \mathbf W_B$, where $\mathbf W_A \in \mathbb{R}^{d \times r}$ and $\mathbf W_B \in \mathbb{R}^{r \times d}$ are trainable matrices with rank $r \ll d$. Each market then performs local fine-tuning using the vanilla CE objective in Eq.~(\ref{eq:ce}).

\noindent \textbf{Item Embeddings.} To further enhance item-level adaptation, we incorporate ID modality~\cite{hou2022towards, xie2022decoupled} on top of unified semantic representations. The enhanced item embedding is defined as:
\begin{equation}\label{eq:idadapter}
\tilde{\mathbf e}_i^m = \mathbf e_i^m + \beta_i \odot g_\theta^m(\mathbf e_i^m),
\end{equation}
where $g_\theta^m(\cdot)$ is a lightweight multi-layer perceptron (MLP) that produces an ID adjustment conditioned on the semantic embedding, and $\beta_i \in (0,1)$ is a learnable gating coefficient controlling the strength of the injected ID information.

Extended theoretical proofs and detailed algorithmic procedures for Section~\ref{sec:methodology} are provided in Appendix~\ref{app:theory} and~\ref{app:algorithm}, respectively.

\begin{table}[thb]\centering
    \caption{Statistics of XMarket across different markets.}
    \resizebox{1.0\columnwidth}{!}{
    \large
    \begin{tabular}{*{9}{c}}
        \toprule
        Markets & ca & mx & us & uk & fr & de & in & jp \\
        \midrule
        \# Users & 5,675 & 1,878 & 35,916 & 4,847 & 2,396 & 2,373 & 239 & 487 \\
        \# Items & 5,772 & 1,645 & 31,125 & 3,392 & 1,911 & 2,210 & 470 & 955 \\
        \# Interactions & 55,045 & 17,095 & 364,339 & 44,515 & 22,905 & 22,247 & 2,015 & 4,485 \\
        Sparsity & 99.83\% & 99.45\% & 99.97\% & 99.73\% & 99.50\% & 99.58\% & 98.21\% & 99.04\% \\
        \bottomrule
    \end{tabular}
    }
    \label{tab:dataset}
\end{table}

\section{EXPERIMENTS}\label{sec:experiments}

\begin{table*}[htbp]
\centering
\caption{Performance comparison. The best and second-best results are bold and underlined, respectively. “Imp.” indicates the improvement of our approach over the strongest baseline, with “*” denoting significance at the 0.05 level (paired $t$-test).}\label{tab:main_exp}
\resizebox{1.0\textwidth}{!}{
\begin{tabular}{llcccccccccccccc}
\toprule
\multirow{2}{*}{\textbf{Datasets}} & \multirow{2}{*}{\textbf{Metrics}}& \multicolumn{3}{c}{\textbf{Local-only Methods}} & \multicolumn{4}{c}{\textbf{Centralized Methods}} & \multicolumn{2}{c}{\textbf{Federated Methods}} & \multicolumn{2}{c}{\textbf{Transfer Methods}} & \multicolumn{1}{c}{\textbf{Ours}} & \multirow{2}{*}{\textbf{Imp.}} \\
\cmidrule(lr){3-5}\cmidrule(lr){6-9}\cmidrule(lr){10-11}\cmidrule(lr){12-13}\cmidrule(lr){14-14}
 &  & \textbf{SASRec} & \textbf{S$^3$Rec} & \textbf{SASRec$_{\text{text}}$} & \textbf{MAML} & \textbf{FOREC} & \textbf{MA} & \textbf{SASRec$_{\text{cen}}$} & \textbf{FedAvg} & \textbf{FedDCSR} & \textbf{UnisRec} & \textbf{CAT-SR} & \textbf{FeCoSR} & \\
\midrule
\multirow{2}*{\textbf{ca}} & HR@10 & 0.1782 & 0.1804 & 0.1763 & 0.1587 & 0.1624 & 0.1706 & 0.1808 & 0.1617 & 0.1761 & 0.1810 & \underline{0.1817} & \textbf{0.1877$^*$} & +3.30\% \\
& NDCG@10 & 0.1132 & 0.1270 & 0.1264 & 0.1086 & 0.1115 & 0.1168 & 0.1253 & 0.1083 & 0.1287 & 0.1163 & \underline{0.1325} & \textbf{0.1399$^*$} & +5.58\% \\
\midrule
\multirow{2}*{\textbf{mx}} & HR@10 & 0.4627 & 0.4915 & 0.5059 & 0.4186 & 0.4275 & 0.4361 & 0.4963 & 0.4015 & 0.4766 & \underline{0.5075} & 0.4989 & \textbf{0.5106} & +0.61\% \\
& NDCG@10 & 0.3351 & 0.3421 & 0.3586 & 0.2947 & 0.3013 & 0.3094 & 0.3236 & 0.2664 & 0.3307 & 0.3056 & \underline{0.3537} & \textbf{0.3656$^*$} & +3.36\% \\
\midrule
\multirow{2}*{\textbf{us}} & HR@10 & 0.1641 & \underline{0.1645} & 0.1421 & 0.1423 & 0.1472 & 0.1518 & 0.1446 & 0.1470 & 0.1400 & 0.1129 & 0.1416 & \textbf{0.1654$^*$} & +0.55\% \\
& NDCG@10 & 0.1171 & \underline{0.1179} & 0.0935 & 0.0997 & 0.1041 & 0.1049 & 0.1056 & 0.0984 & 0.1094 & 0.0617 & 0.0910 & \textbf{0.1191$^*$} & +1.01\% \\
\midrule
\multirow{2}*{\textbf{uk}} & HR@10 & 0.2703 & 0.3055 & 0.2986 & 0.2629 & 0.2695 & 0.2784 & \underline{0.3101} & 0.2706 & 0.2909 & 0.2987 & 0.3013 & \textbf{0.3132$^*$} & +1.00\%\\
& NDCG@10 & 0.1692 & 0.1779 & 0.1991 & 0.1795 & 0.1837 & 0.1908 & 0.2037 & 0.1656 & 0.1938 & 0.1646 & \underline{0.2193} & \textbf{0.2265$^*$} & +3.38\% \\
\midrule
\multirow{2}*{\textbf{fr}} & HR@10 & \textbf{0.2829} & 0.2715 & 0.2742 & 0.2314 & 0.2369 & 0.2453 & 0.2775 & 0.2454 & 0.2748 & 0.2671 & 0.2748 & \underline{0.2824} & - \\
& NDCG@10 & 0.1919 & 0.1903 & 0.1883 & 0.1689 & 0.1724 & 0.1796 & 0.1981 & 0.1669 & 0.1971 & 0.1411 & \underline{0.2024} & \textbf{0.2028} & +0.20\% \\
\midrule
\multirow{2}*{\textbf{de}} & HR@10 & 0.2928 & 0.2885 & 0.2896 & 0.2143 & 0.2217 & 0.2315 & \textbf{0.3025} & 0.2377 & 0.2593 & 0.2474 & 0.2907 & \underline{0.2944} & - \\
& NDCG@10 & 0.1901 & 0.1919 & 0.2013 & 0.1506 & 0.1568 & 0.1643 & 0.2126 & 0.1749 & 0.1986 & 0.1342 & \underline{0.2140} & \textbf{0.2204$^*$} & +2.99\% \\
\midrule
\multirow{2}*{\textbf{in}} & HR@10 & 0.4435 & 0.4310 & 0.4477 & 0.3925 & 0.4018 & 0.4187 & 0.4477 & 0.3933 & 0.4351 & 0.3849 & \underline{0.4561} & \textbf{0.4728$^*$} & +3.66\% \\
& NDCG@10 & 0.2517 & 0.2726 & \underline{0.3037} & 0.2314 & 0.2386 & 0.2478 & 0.2547 & 0.1781 & 0.2662 & 0.1929 & 0.2840 & \textbf{0.3562$^*$} & +17.28\% \\
\midrule
\multirow{2}*{\textbf{jp}} & HR@10 & 0.2895 & 0.2854 & 0.2752 & 0.1816 & 0.1882 & 0.1956 & 0.2669 & 0.1971 & 0.2526 & 0.2156 & \underline{0.2936} & \textbf{0.3060$^*$} & +4.22\% \\
& NDCG@10 & \underline{0.1851} & 0.1775 & 0.1658 & 0.1218 & 0.1267 & 0.1321 & 0.1811 & 0.1162 & 0.1501 & 0.1106 & 0.1781 & \textbf{0.2062$^*$} & +11.40\% \\
\bottomrule
\end{tabular}
}
\end{table*}

\subsection{Experimental Setup}
\subsubsection{Datasets} We conduct experiments on XMarket\footnote{https://xmrec.github.io}, a publicly available real-world cross-market recommendation dataset collected from Amazon. The dataset contains user–item interactions together with rich item textual descriptions. It covers eight electronic markets distributed across three continents and has been widely adopted in previous CMR studies~\cite{bonab2021cross, bhargav2023market, wang2024pre}. The characteristics of the datasets are summarized in Table~\ref{tab:dataset}.

\subsubsection{Compared Methods}
To comprehensively evaluate the effectiveness of our proposed method, we compare it with several representative baselines from four categories:\\
\textbf{i) Single-market local training methods} train models independently on each market without cross-market collaboration. \textit{SASRec}~\cite{kang2018self} is a Transformer-based sequential recommendation (SR) model. \textit{S\text{$^3$}-Rec}~\cite{zhou2020s3} is a self-supervised SR model that leverages multiple pretraining tasks. \textit{SASRec$_{\text{text}}$} is a variant of SASRec where item embeddings are derived from the textual modality.\\
\textbf{ii) Cross-market centralized collaborative methods} jointly train models using data from multiple markets in a centralized manner.
\textit{MAML}~\cite{kang2023outlier} is a meta-learning framework that learns transferable models across markets. \textit{FOREC}~\cite{bonab2021cross} partially freezes the MAML model and performs fine-tuning on the target market. \textit{MA}~\cite{bhargav2023market} leverages market embeddings and auxiliary market data to enhance cross-market recommendation. \textit{SASRec$_{\text{cen}}$} is a centralized variant of SASRec trained on the combined data from all markets.\\
\textbf{iii) Cross-market federated collaborative methods} optimize models across all clients under the federated paradigm.
\textit{FedAvg}~\cite{mcmahan2017communication} is a classical federated algorithm that averages model parameters across clients. \textit{FedDCSR}~\cite{zhang2024feddcsr} addresses feature misalignment in cross-domain sequential recommendation via feature decoupling.\\
\textbf{iv) One-to-one transfer-based methods} pretrain a model on a source market and adapt it to target markets. \textit{UniSRec}~\cite{hou2022towards} is a universal SR model that learns transferable representations for downstream adaptation. \textit{CAT-SR}~\cite{wang2024pre} mitigates item popularity shift in the source market and transfers the pretrained model for market-specific fine-tuning.

\subsubsection{Evaluation Protocol} To measure recommendation quality, we adopt two widely used metrics, Hit Rate (HR@$N$) and Normalized Discounted Cumulative Gain (NDCG@$N$), where $N=10$. Following previous works~\cite{hou2022towards, wang2024pre}, we employ the leave-one-out evaluation protocol, where the last interaction of each user is used for testing, the second-last for validation, and the remaining interactions for training. To ensure accurate and stable evaluation, we rank the ground-truth item against all items that the user has not interacted with, \textit{i.e.}, full ranking.


\subsubsection{Implementation Details} We implement our method using the popular open-source library RecBole\footnote{https://recbole.io}~\cite{zhao2021recbole}. To ensure a fair comparison, we adopt the same basic configurations for all RecBole-based methods, including SASRec, S$^3$Rec, UniSRec, CAT-SR, and our proposed FeCoSR. Specifically, we set the batch size to 256,  optimize with Adam optimizer, and search the learning rate in $\{0.01, 0.001, 0.0001\}$. Other hyperparameters follow the optimal settings recommended in the original papers. For MAML, FOREC, and MA, we reproduce their results using the publicly available code\footnote{https://github.com/samarthbhargav/efficient-xmrec} and reported settings. For all federated methods, we uniformly set the number of global rounds to 20. For FedAvg and FedDCSR, the remaining settings are kept consistent with their original papers. For our method, the temperature parameter $\tau$ is tuned within $[0.01, 0.1]$ with a step size of 0.01 during pretraining, and the low-rank parameter $r$ is selected from $\{1,2,4,8,16\}$ during fine-tuning.

\subsection{Overall Performance}
We evaluate FeCoSR against various baselines across eight markets. 
From the results in Table~\ref{tab:main_exp}, we have the following observations:\\
\textbf{i) FeCoSR consistently outperforms all baseline methods}, achieving the best performance on the majority of markets in terms of HR@10 and all markets in terms of NDCG@10. This demonstrates the strong effectiveness of our framework for CMR. We note that none of the methods, including FeCoSR, show clear improvements over SASRec in terms of HR@10 on \textbf{fr}. This may be attributed to the strong inherent preference patterns in this market, making it difficult to benefit from cross-market collaboration. In the market \textbf{de}, FeCoSR performs slightly worse than the centralized SASRec$_{\text{cen}}$ on HR@10. This is expected since centralized methods have access to all data, which may be more beneficial. Nevertheless, FeCoSR still outperforms other privacy-preserving methods.\\
\textbf{ii) Our method effectively mitigates source degradation}. Previous privacy-preserving CMR approaches typically adopt a one-to-one transfer paradigm, which weakens market-specific preferences in the source market to facilitate transferability. However, this process leads to irreversible performance degradation in the source market, even after fine-tuning. For example, on the market \textbf{us}, UniSRec and CAT-SR perform significantly worse than locally trained SASRec. In contrast, our method enables collaborative pretraining across markets via federated optimization, allowing all markets to benefit without sacrificing their own performance.\\
\textbf{iii) Our method effectively avoids negative transfer} through its two-stage design, consistently improving performance over single-market models such as SASRec and S$^3$Rec. The federated pretraining stage focuses on learning behavior-level patterns, reducing conflicts caused by item-level preference heterogeneity across markets, and promoting knowledge sharing. The local fine-tuning stage further adapts the model to market-specific preferences, achieving a better balance between generalization and localization.\\
\textbf{iv) FeCoSR enables effective and robust collaboration under privacy constraints.} 
Centralized training fails to reconcile heterogeneous item preferences across markets while also violating privacy. Compared with SASRec, SASRec$_{\text{cen}}$ even hurts performance on some markets, \textit{e.g.}, \textbf{us}, \textbf{fr}, \textbf{jp}. Existing federated methods, \textit{e.g.}, FedAvg, FedDCSR, preserve privacy but do not explicitly address market heterogeneity. In contrast, FeCoSR explicitly addresses this issue via the S$^2$CE objective, achieving more stable performance improvements across markets.\\
\textbf{v) FeCoSR achieves substantial improvements on data-scarce markets} such as \textbf{in} and \textbf{jp}, demonstrating its strong capability to leverage cross-market knowledge and effectively address the data sparsity issue in CMR scenarios.

\begin{table}[t]
\centering
\caption{Ablation study of FeCoSR. The best results are bold.}\label{tab:ablation}
\resizebox{1.0\columnwidth}{!}{
\begin{tabular}{l|cccccccc}
\toprule
\textbf{HR@10} & ca & mx & us & uk & fr & de & in & jp \\
\midrule
FeCoSR & \textbf{0.1877} & \textbf{0.5106} & \textbf{0.1654} & \textbf{0.3132} & \textbf{0.2824} & \textbf{0.2944} & \textbf{0.4728} & \textbf{0.3060} \\
w/o Fed	& 0.1836 & 0.5037 & 0.1612 & 0.3049 & 0.2775 & 0.2885 & 0.4644 & 0.2854\\
w/o S$^2$CE & 0.1793 & 0.4920 & 0.1506 & 0.2983 & 0.2628 & 0.2804 & 0.4435 & 0.2752 \\
w/o LR & 0.1851 & 0.5080 & 0.1641 & 0.3058 & 0.2764 & 0.2836 & 0.4770 & 0.2854 \\
w/o ID & 0.1838 & 0.5037 & 0.1615 & 0.3019 & 0.2677 & 0.2890 & 0.4603 & 0.2895 \\
\toprule
\textbf{NDCG@10} &ca & mx & us & uk & fr & de & in & jp \\
\midrule
FeCoSR & \textbf{0.1399} & \textbf{0.3656} & \textbf{0.1191} & \textbf{0.2265} & \textbf{0.2028} & \textbf{0.2204} & \textbf{0.3562} & \textbf{0.2062} \\
w/o Fed & 0.1368 & 0.3498 & 0.1154 & 0.2167 & 0.1977 & 0.2074 & 0.2903 & 0.1966\\
w/o S$^2$CE & 0.1348 & 0.3064 & 0.1005 & 0.2103 & 0.1829 & 0.2056 & 0.2446 & 0.1891 \\
w/o LR & 0.1382 & 0.3630 & 0.1189 & 0.2169 & 0.1876 & 0.2116 & 0.3180 & 0.1972 \\
w/o ID & 0.1359 & 0.3469 & 0.1142 & 0.2032 & 0.1743 & 0.1905 & 0.2259 & 0.1746 \\
\bottomrule
\end{tabular}
}
\end{table}

\subsection{Ablation Study}
We conduct an ablation study by removing key designs of FeCoSR to examine their individual impacts. For the pretraining stage, \textbf{Fed} denotes the collaborative optimization via federated aggregation across markets, and \textbf{S$^2$CE} refers to the semantic-aware objective designed to mitigate cross-market heterogeneity. For the fine-tuning stage, \textbf{LR} denotes the low-rank adaptation applied to the sequence encoder, and \textbf{ID} represents the incorporation of ID information into item embeddings. From the results in Table~\ref{tab:ablation}, we observe that:\\
i) Removing the federated collaboration (w/o Fed) leads to consistent performance degradation across all markets, indicating that knowledge sharing is essential for improving overall performance.\\
ii) Replacing S$^2$CE with the vanilla CE objective (w/o S$^2$CE) results in the most significant performance drop across all markets, especially in HR@10. This confirms that S$^2$CE effectively mitigates cross-market heterogeneity by transforming heterogeneous item-level preferences into shared behavior-level patterns, thereby facilitating more effective collaboration.\\
iii) Removing item-level adaptation from either the sequence encoder (w/o LR) or the item embeddings (w/o ID) results in noticeable performance degradation. The impact is particularly evident on ranking-sensitive metrics such as NDCG@10, highlighting the importance of both components for accurate item-level modeling.

Overall, FeCoSR achieves superior performance by jointly leveraging cross-market collaboration, behavior-level modeling, and item-level refinement.

\begin{figure}[t]
\centering
\includegraphics[width=1.0\columnwidth]{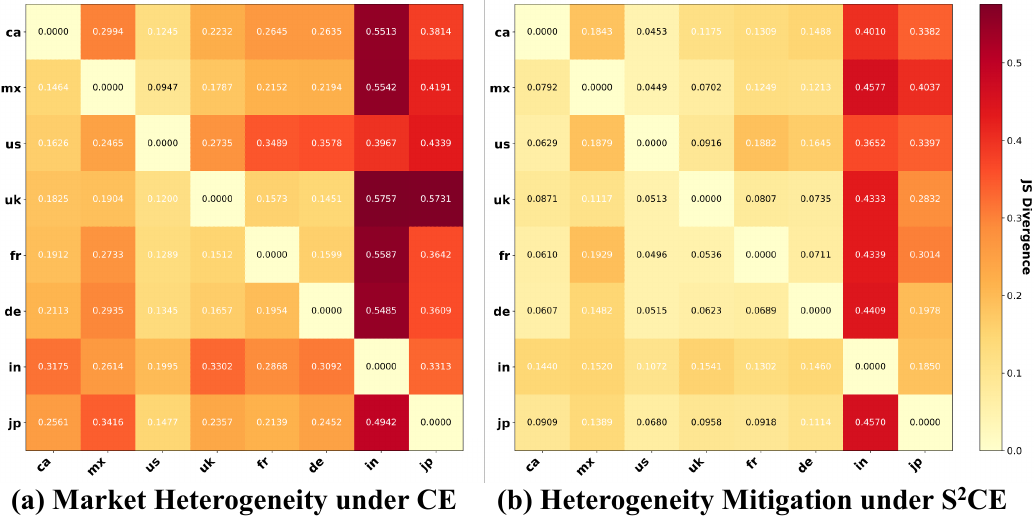}
\caption{Inter-market heterogeneity under CE and S$^2$CE.}
\label{pic:heterogeneity}
\end{figure}

\subsection{In-depth Analysis}
\subsubsection{Heterogeneity Mitigation}\label{sec:heterogeneity} To further analyze inter-market heterogeneity, we measure the Jensen–Shannon (JS) divergence between model predictions across markets. As shown in Figure~\ref{pic:heterogeneity}, each row represents a market $m$ providing behavioral data $\mathcal{S}^m$, and each column represents a market $m’$ providing the behavior encoder $f_\theta^{m'}(\cdot)$. For each entry, we compute the JS divergence between the overall item prediction distribution of $f_\theta^{m'}(\cdot)$ and that of the local encoder $f_\theta^{m}(\cdot)$. A larger JS divergence indicates that the behavior patterns learned by market $m’$ are less compatible with those of market $m$. Figure~\ref{pic:heterogeneity}(a) presents the JS divergence matrix under the vanilla CE. We observe large divergences between many market pairs, indicating significant heterogeneity across markets. In contrast, Figure~\ref{pic:heterogeneity}(b) shows the results under the proposed S$^2$CE, where the divergences are consistently reduced, suggesting improved cross-market collaboration. Notably, the markets \textbf{in} and \textbf{jp} exhibit larger divergences from others, which may be attributed to regional differences or their relatively limited training data, leading to less sharable behavioral modeling. Nevertheless, our method effectively mitigates their adverse impact on other markets.

\begin{figure}[t]
\centering
\includegraphics[width=1.0\columnwidth]{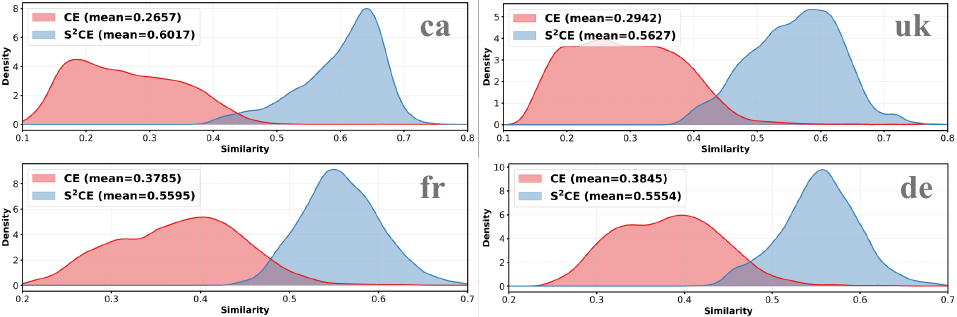}
\caption{Distributions of inter-user representation similarities under behavior encoders trained with CE and S$^2$CE.}
\label{pic:behaviorlevel}
\end{figure}

\subsubsection{Behavior-level Patterns} We compute inter-user representation similarities and visualize their distributions using kernel density estimation (KDE)~\cite{chen2017tutorial}. As some markets shown in Figure~\ref{pic:behaviorlevel}, user similarities learned by the vanilla CE are generally lower and exhibit a relatively flat distribution, indicating that user representations are dispersed in the latent space and focus on item-level preferences. In contrast, the S$^2$CE-based model leads to an overall increase in user similarity, suggesting the emergence of clustered structures in the representation space. The users within the same cluster share similar semantic interests in items, reflecting underlying behavioral regularities. Importantly, the distribution under S$^2$CE still maintains a reasonable spread rather than collapsing into a narrow peak, indicating that the model captures informative knowledge. Overall, these results demonstrate that S$^2$CE effectively captures behavior-level patterns from item semantics, rather than focusing on item-level preferences that are often market-specific.

\subsubsection{Item-level Preferences} We further examine the distribution of inter-user representation similarities using KDE. The comparison between the pretrained and fine-tuned behavior encoders is shown in Figure~\ref{pic:itemlevel} (above). Compared with the pretrained encoder, the overall similarity decreases after fine-tuning, indicating that user representations become less clustered. This suggests a transition from capturing shared semantic signals to modeling fine-grained interests, \textit{i.e.}, the model refines behavior-level patterns towards item-level preferences. We also visualize the item embedding distributions before and after incorporating ID modality, as shown in Figure~\ref{pic:itemlevel} (below). The textual modality provides a shared semantic space that captures general item semantics and enables cross-market behavior modeling. Building upon this shared space, each market performs localized adjustments by injecting ID signals, resulting in market-specific item-level preferences while retaining global behavioral knowledge.

\begin{figure}[t]
\centering
\includegraphics[width=1.0\columnwidth]{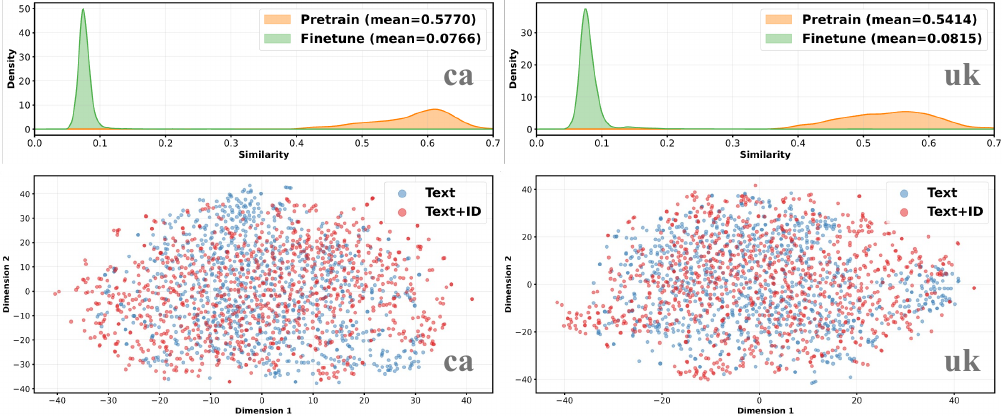}
\caption{(Above) Distributions of inter-user representation similarities before and after behavior encoder fine-tuning. (Below) Visualization of item embeddings before and after incorporating ID modality.}
\label{pic:itemlevel}
\end{figure}

\begin{figure}[t]
\centering
\includegraphics[width=0.99\columnwidth]{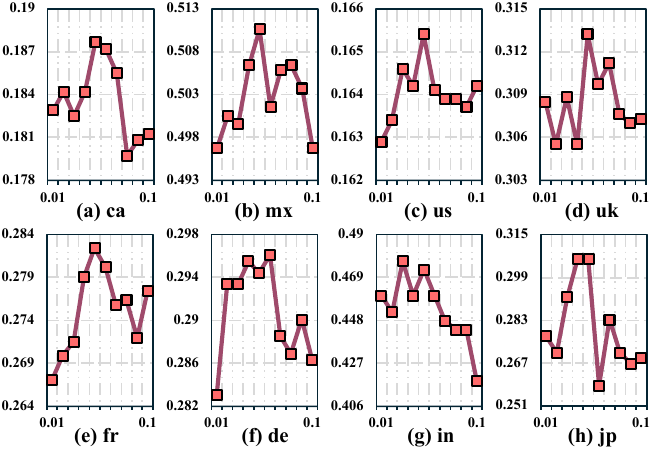}
\caption{Effect of the temperature parameter $\tau$ during federated pretraining, evaluated by HR@10.}
\label{pic:param_tau_HR}
\end{figure}

\subsection{Hyperparameter Analysis}\label{sec:exp_hyper}
We analyze the impact of the two main hyperparameters as follows:\\
\textbf{i) The temperature coefficient $\bm{\tau}$} during federated pretraining. For simplicity, we use a uniform $\tau$ value for all markets during pretraining. Figure~\ref{pic:param_tau_HR} shows the impact of $\tau$ on HR@10 across different markets. Most markets achieve their best performance within the range $\tau=0.03$–$0.07$, which is consistent with the role of $\tau$, \textit{i.e.}, controlling the smoothness of the semantic soft supervision. Specifically, a too-small $\tau$ approaches one-hot supervision similar to vanilla CE, exacerbating inter-market heterogeneity and hindering collaboration. Conversely, a too-large $\tau$ over-smooths the supervision, causing negative samples to receive non-negligible predicted probabilities and thereby reducing the model’s discriminative ability. Additionally, we observe that some markets, \textit{e.g.}, \textbf{in} and \textbf{jp}, show notable drops at extreme $\tau$ values, which can be explained by their limited local training data and greater reliance on cross-market knowledge. Considering all markets, we choose $\tau=0.05$ as a balanced setting for our experiments. \\
\textbf{ii) The low-rank parameter $\bm{r}$} during local fine-tuning. The impact of $r$ on HR@10 across different markets is shown in Figure~\ref{pic:param_rank_HR}. When $r=0$, \textit{i.e.}, without fine-tuning, the performance is consistently worse, while small values of $r$ achieve the best results, indicating that lightweight adaptation is sufficient to capture market-specific item-level preferences. As $r$ increases, the performance first improves and then degrades, suggesting that excessive adaptation may lead to overfitting or disrupt the shared behavior-level patterns learned during federated pretraining. Overall, relatively small ranks, \textit{e.g.}, $r\in\{1,2,4\}$, provide a good trade-off across all markets.

Detailed experimental settings and additional results for Section~\ref{sec:experiments} are provided in Appendix~\ref{app:experiments}.

\begin{figure}[t]
\centering
\includegraphics[width=0.98\columnwidth]{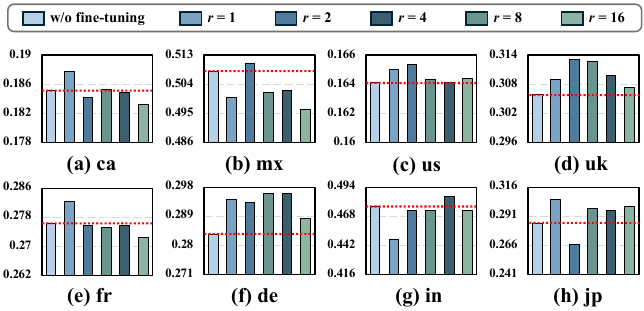}
\caption{Effect of the low-rank $r$ during local fine-tuning, evaluated by HR@10.}
\label{pic:param_rank_HR}
\end{figure}

\section{CONCLUSION}
In this work, we revisit CMR and identify two fundamental limitations of existing one-to-one transfer paradigms, \textit{i.e.}, source degradation and negative transfer. To this end, we propose FeCoSR, a novel federated collaboration framework that enables many-to-many knowledge sharing across markets. FeCoSR introduces a two-stage design that decouples behavior-level collaboration and item-level preference modeling, allowing all markets to benefit without sacrificing their own performance. Furthermore, we reveal that the vanilla CE objective exacerbates market heterogeneity and hinders effective federated optimization. Then, we propose the S$^2$CE objective, which leverages semantic soft supervision to promote cross-market behavioral sharing. Extensive experiments on real-world datasets demonstrate that FeCoSR consistently outperforms other methods while preserving privacy.

\bibliographystyle{ACM-Reference-Format}
\bibliography{sample-base}

\appendix

\section{Detailed Theoretical Analysis}\label{app:theory}

\begin{remark}[Cross-Market Shared Behavior and Market-Specific Items]
In cross-market recommendation, user behaviors often exhibit shared high-level patterns across markets, while item-level preferences are typically market-specific due to variations in item availability, cultural context, and local interests. 
Such patterns have been observed in multiple multi-domain datasets. For instance, in Amazon product review datasets~\cite{mcauley2015image}, users across different categories often show similar sequential or co-purchase behavior, even though the specific items differ by category. This observation motivates our analysis of how different training objectives balance shared behavior-level learning and market-specific item-level adaptation.
\end{remark}

Based on the above observation, we provide a theoretical analysis of \emph{Market Heterogeneity under Vanilla CE} and \emph{Heterogeneity Mitigation under S$^2$CE}. Compared to the main text, we further investigate how CE and S$^2$CE respectively influence federated optimization through heterogeneity across markets. For clarity, the formal notations used throughout the paper are summarized in Table~\ref{tab:notation}.

\begin{table}[t]
\centering
\caption{Notation Table.}\label{tab:notation}
\resizebox{1.0\columnwidth}{!}{
\begin{tabular}{cc}
\toprule
\textbf{Notation} & \textbf{Meaning} \\
\midrule

$\mathcal{M}$, $m$ & set of markets (clients), index of market \\
$\mathcal{U}^m$, $\mathcal{I}^m$, $u$ & set of users and items, index of user \\
$|\mathcal{U}^m|$, $|\mathcal{I}^m|$, $d$ & number of items, item embedding dimension\\
$\mathcal{S}^m$, $s_u^m$ & dataset of market $m$, interaction sequence of user $u$\\
$i_k^m$, $j_u^+$ & the $k$-th interacted item, ground-truth next item of user $u$\\
$t_i$, $w_k$ & textual content of item $i$, the $k$-th word\\
$\Theta^m=\{\mathbf{E}^m, f_\theta^m(\cdot)\}$ & model parameters for market $m$ \\
$\mathbf{E}^m, \mathbf{e}_i^m$ & item embedding matrix,  embedding of item $i$\\
$f_\theta^m(\cdot)$, $\mathbf{h}_u^m$& sequential behavior encoder, user representation\\
$\mathbf{r}_u^m$ & predicted scores over items \\

$\mathbf{W}_A\mathbf{W}_B$ & low-rank matrices for behavior encoder\\
$g_\theta^m(\cdot)$ & a lightweight multi-layer perceptron (MLP) for item embeddings\\

$\Theta_{\text{pre}}^{\text{fed}}$, $\alpha_m$ & federated pretrained model, aggregation weight of market $m$ \\

$\mathcal{L}_{\text{pre}}$, $\mathcal{L}_{\text{ft}}$ & pretraining loss, fine-tuning loss \\
$\mathcal{L}_{\text{CE}}$, $\mathcal{L}_{\text{S}^2\text{CE}}$ & cross-entropy loss, semantic soft cross-entropy loss\\
$sim(\cdot\,,\cdot)$, $\tau$ & similarity function (cosine similarity), temperature parameter \\
$p_\theta^m(j \mid \mathbf{s}_u^m)$ & predicted probability of item $j$ \\
$q_{\text{sem}}^m(j \mid j_u^+)$ & semantic soft weight over item $j$ \\
$\text{KL}(\cdot \| \cdot)$, $\text{JS}(\cdot\,, \cdot)$ & Kullback--Leibler divergence, Jensen–Shannon divergence \\
$H(\cdot, \cdot)$, $H(\cdot)$ & cross-entropy and entropy function\\
$\epsilon$, $\sigma$ & a very small positive constant, semantic similarity \\
$\Theta_m^\varepsilon$, $\varepsilon$ & the set of $\varepsilon$-optimal solutions for market $m$\\

$R$, $E$, $B$, $\eta$ & global rounds, local
epochs, batch size, learning rate \\

\bottomrule
\end{tabular}
}
\end{table}

\subsection{Market Heterogeneity under Vanilla CE}

\begin{proposition}[Market Heterogeneity under Vanilla CE]
In cross-market recommendation, shared behavior-level patterns may exist across markets. However, local training with the vanilla CE objective drives each market toward near one-hot predictions over item-level preferences. This induces amplified heterogeneity across markets and disturbs the collaboration of federated optimization.
\end{proposition}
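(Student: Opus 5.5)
The plan is to formalize the three claims of the proposition in turn: near one-hot optimality of local CE, divergence between markets' predictive distributions, and the resulting disturbance of federated optimization. First, fix a market $m$ and introduce the set of $\varepsilon$-optimal solutions
\[
\Theta_m^\varepsilon=\{\theta:\ \mathcal{L}_{\text{CE}}^m(\theta)\le \inf_{\theta'}\mathcal{L}_{\text{CE}}^m(\theta')+\varepsilon\}.
\]
Since $-\log p$ is a per-user term and $-\log p\le\varepsilon'$ forces $p\ge e^{-\varepsilon'}\ge 1-\varepsilon'$, I will show the following. When the infimum is zero (sufficiently expressive encoder, deterministic next item per context), every $\theta\in\Theta_m^\varepsilon$ satisfies $p_\theta^m(j_u^+|s_u^m)\ge 1-\epsilon$ for each user, with $\epsilon\to0$ as $\varepsilon\to0$. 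By normalization this gives $\sum_{j\neq j_u^+}p_\theta^m(j|s_u^m)\le\epsilon$. This is the ``near one-hot'' part, and it simply reproduces Eqs.~(\ref{eq:p1_pos})--(\ref{eq:p1_neg}) in a quantified form.

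Second, I reuse the two-market construction from the proof of Proposition~\ref{pro:ib_vce}. Take behaviorally equivalent contexts in $m_1$ and $m_2$ (same shared pattern, e.g.\ ``phone $\to$ charger'') whose ground truths differ, $a\neq b$. I split $\mathrm{KL}(p^{m_1}\|p^{m_2})$ into the $j=a$ term and the rest as in Eq.~(\ref{eq:kl_ce}). The $a$-term is at least $(1-\epsilon)\log\frac{1-\epsilon}{\epsilon}$. For the remainder, I lower bound it using the log-sum inequality, $\sum_{j\ne a}p_1\log(p_1/p_2)\ge P_1\log(P_1/P_2)\ge -1/e$, where $P_1,P_2$ are the residual masses. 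This yields $\mathrm{KL}\ge\log(1/\epsilon)-O(1)\to\infty$, which is a cleaner bound than the paper's $O(\epsilon)$ claim. By symmetry the JS divergence tends to $\log 2$, its maximum; this links the argument to Section~\ref{sec:heterogeneity}.

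Third, I translate predictive divergence into optimization heterogeneity. I will use the standard gradient-dissimilarity notion from~\cite{li2019convergence, li2020federated}: $\frac1{|\mathcal M|}\sum_m\|\nabla\mathcal{L}^m(\theta)\|^2\le B^2\|\nabla\mathcal{L}(\theta)\|^2$, or equivalently the gap $\Gamma=\mathcal{L}^*-\sum_m\alpha_m\mathcal{L}^{m*}$. For CE, the logit gradient of a user is $p_\theta^m(\cdot|s)-\mathbf 1_{j_u^+}$. At a market-$m_1$ optimum, evaluating market $m_2$'s loss on the shared context gives $-\log p^{m_1}(b)\ge\log(1/\epsilon)$ and a gradient of norm close to $\sqrt2$ in the direction $\mathbf 1_a-\mathbf 1_b$. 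Hence the local optima sets $\Theta_{m_1}^\varepsilon$ and $\Theta_{m_2}^\varepsilon$ are disjoint on these contexts, and $\Gamma\gtrsim\log(1/\epsilon)$ grows unboundedly. Plugging this $\Gamma$ into the FedAvg convergence bound of~\cite{li2019convergence}, where the error term scales with $\Gamma$, gives the claimed disturbance of federated optimization.

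The main obstacle is this third step, in particular making ``shared behavior-level pattern'' precise enough that the same encoder input genuinely appears in both markets while the item sets differ. Item sets differ across markets, so the softmax supports differ. My first attempt is to take the union item set, with zero probability assigned to unavailable items, and to assume a common text-embedding-induced context representation $\mathbf h$. That makes the two local objectives conflicting requirements on a single $\mathbf h$. If this proves too restrictive, I will fall back to stating the result for the worst-case pair of contexts and leave the federated consequence as a bound-based remark rather than a strict theorem.
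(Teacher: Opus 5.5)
Your Steps 1--2, and the disjointness of $\Theta_{m_1}^\varepsilon$ and $\Theta_{m_2}^\varepsilon$ in Step 3, follow the paper's proof essentially step for step: near one-hot predictions at optimality, the KL split with the $(1-\epsilon)\log\frac{1-\epsilon}{\epsilon}$ bound, then non-intersecting $\varepsilon$-optimal sets. Two of your additions tighten it. The hypothesis $\inf_\theta\mathcal{L}_{\text{CE}}=0$ is exactly what the paper's unexplained $\delta(\varepsilon)\to 0$ needs. Your log-sum bound is the right treatment of the residual, which need not be $O(\epsilon)$ from above but only has to be bounded below.

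The step that fails is the claim that $\Gamma$ grows like $\log(1/\epsilon)$. The estimate $-\log p_\theta(b\mid s)\ge\log(1/\epsilon)$ bounds the federated objective at one particular parameter, an $m_1$-optimal one. But $\Gamma=\mathcal{L}^*-\sum_m\alpha_m\mathcal{L}^{m*}$ is defined through the global minimum, and the global minimizer hedges. With $\alpha_1=\alpha_2=\tfrac12$ and one conflicting shared context, the federated loss on that context is $-\tfrac12\log p^{m_1}_\theta(a\mid s)-\tfrac12\log p^{m_2}_\theta(b\mid s)$. This is at least $\log 2$, and about $\log 2$ when both probabilities are near $\tfrac12$, so it is made small by balancing, not by concentrating. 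In general $0\le\Gamma\le\mathcal{L}^*\le\sum_m\alpha_m\mathcal{L}_{\text{CE}}(\theta_0;\mathcal{S}^m)$ for any fixed $\theta_0$. So $\Gamma$ is a constant of the problem and cannot depend on your tolerance $\epsilon$, let alone blow up with it.

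What your argument does establish is $\Gamma>0$, i.e.\ no common minimizer. That is precisely the paper's conclusion that the $\varepsilon$-optimal sets have empty intersection. Feeding a bounded positive $\Gamma$ into the FedAvg bound you cite only worsens a constant. That bound also assumes strong convexity and smoothness, which a SASRec encoder lacks. So keep the disjointness as the formal statement and the FedAvg link as a remark.

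On the obstacle you flag: you do not need a union support with zero-padding. The paper silently assumes $p_\theta^{m_2}(a\mid s_u^{m_2})\le\epsilon$. What makes that true is a shared representation $\mathbf h=f_\theta(s)$ together with co-availability $a,b\in\mathcal{I}^{m_1}\cap\mathcal{I}^{m_2}$. Then $m_1$-optimality gives $e^{\mathbf h\cdot(\mathbf e_b-\mathbf e_a)}=p^{m_1}_\theta(b\mid s)/p^{m_1}_\theta(a\mid s)\le\epsilon/(1-\epsilon)$. Because $a$ appears in the market-$m_2$ softmax, $p^{m_2}_\theta(b\mid s)\le e^{\mathbf h\cdot(\mathbf e_b-\mathbf e_a)}\le\epsilon/(1-\epsilon)$, with each market keeping its own normalizer.

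Co-availability cannot be dropped. Suppose $a\notin\mathcal{I}^{m_2}$ and $b\notin\mathcal{I}^{m_1}$, as in the paper's own Apple/Samsung example with each item sold in only one market. Then nothing prevents one $\mathbf h$ from being near one-hot on $a$ in $m_1$ and on $b$ in $m_2$. So the proposition should be stated for co-available conflicting pairs, in line with your fallback.
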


\begin{proof}
\textbf{Step 1: One-Hot Predictive Distributions under CE.}

Under the vanilla CE objective, the model maximizes the likelihood of the ground-truth item:
\begin{equation}
    \min_\theta -\log p_\theta^m(j_u^+ | s_u^m)
    \;\Longleftrightarrow\;
    \max_\theta p_\theta^m(j_u^+ | s_u^m).
\end{equation}

As the model approaches optimality, the predictive distribution becomes highly concentrated on the ground-truth item. 
Consider two users from markets $m_1$ and $m_2$ with ground-truth items $a$ and $b$. Then, we have:
\begin{equation}
p_\theta^{m_1}(a | s_u^{m_1}) \ge 1 - \epsilon,
\quad
p_\theta^{m_2}(b | s_u^{m_2}) \ge 1 - \epsilon,
\end{equation}
where $\epsilon \to 0$, and:
\begin{equation}
\sum_{j \neq a} p_\theta^{m_1}(j | s_u^{m_1}) \le \epsilon,
\quad
p_\theta^{m_2}(a | s_u^{m_2}) \le \epsilon.
\end{equation}

\noindent \textbf{Step 2: Market Heterogeneity by Predictive Divergence.}

\noindent The KL divergence between the two predictive distributions is:
\begin{equation}
\begin{split}
    & \mathrm{KL}\big(p_\theta^{m_1}(\cdot | s_u^{m_1}) \,\|\, p_\theta^{m_2}(\cdot | s_u^{m_2})\big) = \sum_j p_\theta^{m_1}(j | s_u^{m_1}) \log \frac{p_\theta^{m_1}(j | s_u^{m_1})}{p_\theta^{m_2}(j | s_u^{m_2})}\\
    &= p_\theta^{m_1}(a | s_u^{m_1})\log \frac{p_\theta^{m_1}(a | s_u^{m_1})}{p_\theta^{m_2}(a | s_u^{m_2})} + \sum_{j \neq a} p_\theta^{m_1}(j | s_u^{m_1})\log \frac{p_\theta^{m_1}(j | s_u^{m_1})}{p_\theta^{m_2}(j | s_u^{m_2})}.
\end{split}
\end{equation}

\noindent Using the concentration bounds in Step 1:
\begin{equation}
p_\theta^{m_1}(a | s_u^{m_1})
\log \frac{p_\theta^{m_1}(a|s_u^{m_1})}{p_\theta^{m_2}(a | s_u^{m_2})}
\ge (1 - \epsilon)\log \frac{1 - \epsilon}{\epsilon},
\end{equation}
while the remaining terms are bounded by $O(\epsilon)$. Hence:
\begin{equation}
\mathrm{KL}(p_\theta^{m_1}\|p_\theta^{m_2}) \ge (1 - \epsilon)\log \frac{1 - \epsilon}{\epsilon} + O(\epsilon)
\approx \log(1/\epsilon),
\end{equation}
which diverges as $\epsilon \to 0$. This divergence indicates that predictive distributions induced by different markets become mutually incompatible, \textit{market heterogeneity}.

\noindent \textbf{Step 3: Heterogeneity-Disrupted Federated Optimization.}

\noindent Define the $\varepsilon$-optimal set for each market:
\begin{equation}
\Theta_m^\varepsilon 
= \big\{ \theta : \mathcal{L}_{\mathrm{CE}}(\theta; \mathcal{S}^m) 
\le \inf_{\theta'} \mathcal{L}_{\mathrm{CE}}(\theta'; \mathcal{S}^m) + \varepsilon \big\}.
\end{equation}

\noindent From Step 1, for any $\theta \in \mathcal{S}_m^\varepsilon$, there exists $\delta(\varepsilon) \to 0$ such that:
\begin{equation}
p_\theta^m(j_u^+ | s_u^m) \ge 1 - \delta(\varepsilon),
\end{equation}
\textit{i.e.}, the predictive distribution is highly concentrated on the ground-truth item. Now consider two markets $m_1 \neq m_2$ with different ground-truth items $a \neq b$.  
For any $\theta \in \Theta_{m_1}^\varepsilon$, the CE loss on market $m_1$ is near its minimum:
\begin{equation}
\mathcal{L}_{\mathrm{CE}}(\theta; \mathcal{S}^{m_1}) 
\approx -\log p_\theta^{m_1}(a | s_u^{m_1}) 
\le -\log(1 - \delta(\varepsilon)) + \varepsilon,
\end{equation}
while the loss on market $m_2$ is lower bounded as:
\begin{equation}
\mathcal{L}_{\mathrm{CE}}(\theta; \mathcal{S}^{m_2})
\ge -\log p_\theta^{m_2}(b | s_u^{m_2})
\ge -\log (1 - \delta(\varepsilon)) + c,
\end{equation}
for some constant $c > 0$, since the model concentrated on item $a$ in $m_1$ cannot simultaneously assign high probability to $b$ in $m_2$.

By symmetry, the same argument holds if we start from $\theta \in \Theta_{m_2}^\varepsilon$.  
Hence, for sufficiently small $\varepsilon$, the $\varepsilon$-optimal sets of the two markets do not intersect:
\begin{equation}
\Theta_{m_1}^\varepsilon \cap \Theta_{m_2}^\varepsilon = \varnothing,
\end{equation}
and more generally, we have:
\begin{equation}
\bigcap_{m \in \mathcal{M}} \Theta_{m}^\varepsilon = \varnothing.
\end{equation}
Since federated learning seeks a single global parameter that minimizes a weighted sum of local CE losses:
\begin{equation}
\theta^{\mathrm{fed}}_{\mathrm{CE}} 
= \arg\min_\theta \sum_{m \in \mathcal{M}} \alpha_m \, \mathcal{L}_{\mathrm{CE}}(\theta; \mathcal{S}^m),
\end{equation}
the disjointness of the $\varepsilon$-optimal sets implies that there does not exist a parameter $\theta$ that is simultaneously $\varepsilon$-optimal for all markets. This formally shows that market heterogeneity induced by the CE objective disturbs federated optimization.

\end{proof}

\subsection{Heterogeneity Mitigation under S$^2$CE}

\begin{proposition}[Heterogeneity Mitigation under S$^2$CE]\label{pro:s2ce}
The S$^2$CE loss encourages the model to tolerate items that are semantically similar to the ground-truth. This supervision promotes the learning of shared behavior-level patterns and effectively mitigates the heterogeneity caused by item-level preferences across markets.
\end{proposition}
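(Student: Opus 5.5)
The plan mirrors the three-step structure used for the vanilla CE proposition in the appendix, replacing one-hot targets with the semantic soft targets $q^m_{\text{sem}}$ and tracking how each step changes.

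\textbf{Step 1: Soft predictive distributions under S$^2$CE.} I first rewrite the S$^2$CE objective in Eq.~(\ref{eq:s2ce}) as a cross-entropy $H(q^m_{\text{sem}},p^m_\theta)=\mathrm{KL}(q^m_{\text{sem}}\|p^m_\theta)+H(q^m_{\text{sem}})$. Since the entropy term does not depend on $\theta$, any $\varepsilon$-optimal solution satisfies $\mathrm{KL}(q^m_{\text{sem}}\|p^m_\theta)\le\varepsilon$. By Pinsker's inequality, $p^m_\theta$ is then close to $q^m_{\text{sem}}$ in total variation. Next I lower-bound the soft target mass on a cross-market item. With $\sigma=sim(\mathbf e_a,\mathbf e_b)$, every cosine similarity is at most $1$, so the denominator is at most $|\mathcal I^{m_2}|\exp(1/\tau)$. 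This gives
\[
q^{m_2}_{\text{sem}}(a|b)\ge \frac{\exp((\sigma-1)/\tau)}{|\mathcal I^{m_2}|}.
\]
The point is that the target never places vanishing mass on the semantically related item $a$, in contrast to Step~1 of the CE case.

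\textbf{Step 2: Bounded cross-market divergence.} I plug the near-optimal distributions into the same KL decomposition used in the CE proof, $\mathrm{KL}(p^{m_1}_\theta\|p^{m_2}_\theta)=\sum_j p^{m_1}_\theta(j)\log\bigl(p^{m_1}_\theta(j)/p^{m_2}_\theta(j)\bigr)$. The cleanest bound uses $\sum_j p^{m_1}_\theta(j)\log p^{m_1}_\theta(j)\le 0$ and $p^{m_2}_\theta(j)\approx q^{m_2}_{\text{sem}}(j|b)\ge \exp(-2/\tau)/|\mathcal I^{m_2}|$ for every $j$, which gives a finite bound of order $\log|\mathcal I^{m_2}|+2/\tau$. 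Restricting attention to the dominant term at $a$, as in the main-text argument, sharpens this to
\[
\log|\mathcal I^{m_2}|+\frac{1-\sigma}{\tau}.
\]
This bound is finite and independent of $\epsilon$. By contrast, Proposition~\ref{pro:ib_vce} gives a divergence of order $\log(1/\epsilon)$, which grows without bound. The comparison $\mathrm{KL}_{S^2CE}\ll\mathrm{KL}_{CE}$ as $\epsilon\to0$ therefore follows directly.

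\textbf{Step 3: Compatible $\varepsilon$-optimal sets.} Here I argue that the optimal sets, which were disjoint under CE, can now overlap. Under S$^2$CE, if $\theta$ is near-optimal for $m_1$, then $p^{m_1}_\theta(b)\gtrsim \exp((\sigma-1)/\tau)/|\mathcal I^{m_1}|$. So the loss on $m_2$ exceeds its optimum by at most $\mathrm{KL}(q^{m_2}_{\text{sem}}\|p_\theta)$, and this excess is controlled by the $\tau$- and $\sigma$-dependent bound from Step~2 rather than by a strictly positive constant gap $c$. When items are semantically close ($\sigma\to1$) and the $q$'s are similar, this gap tends to zero. Consequently the sets $\Theta_m^{\varepsilon}$ can intersect, which is what federated averaging needs.

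The main obstacle is making Step~3 rigorous. The two markets have different item sets and different sequences, so a single $\theta$ realizing both soft targets is not guaranteed to exist. I would first try an expressivity assumption, namely that behavior-level targets are realizable by one shared encoder on the shared semantic space. I would then bound the gap via the total variation distance $\|q^{m_1}_{\text{sem}}(\cdot|a)-q^{m_2}_{\text{sem}}(\cdot|b)\|$, which shrinks as $\sigma\to1$ and as $\tau$ grows.
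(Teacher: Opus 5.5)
Your proposal is essentially the paper's own proof. It uses the same KL-plus-entropy rewrite, the same bound $q^{m_2}_{\mathrm{sem}}(a|b)\ge \exp((\sigma-1)/\tau)/|\mathcal I^{m_2}|$ giving $\log|\mathcal I^{m_2}|+(1-\sigma)/\tau$ against $\log(1/\epsilon)$ under CE, and the same $\varepsilon$-optimal-set comparison; the paper closes that last step via a two-sided bound on $q^m_{\mathrm{sem}}$ (a bounded ratio $C_0$) and claims only that $\bigcap_m\Theta_m^{\varepsilon+C'}$ is nonempty, i.e.\ intersection after enlarging the tolerance, not of the sets $\Theta_m^{\varepsilon}$ themselves. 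The paper simply asserts the transfer from $p^{m_1}_\theta$ to $p^{m_2}_\theta$ that you identify as the obstacle, using a KL ``triangle inequality'' through $q^{m_1}_{\mathrm{sem}}$ that does not hold in general; so your uniform bound $\log|\mathcal I^{m_2}|+2/\tau$ (which, unlike the $(1-\sigma)/\tau$ sharpening, does not need $p^{m_1}_\theta$ to concentrate on $a$) and your explicit expressivity assumption are, if anything, more careful than the original.
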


\begin{proof}
\textbf{Step 1: Soft Predictive Distributions under S$^2$CE.}

The S$^2$CE loss is defined as the cross-entropy between the soft target distribution $q_{\text{sem}}(\cdot|j_u^+)$ and the model prediction $p_\theta^m(\cdot|s_u^m)$:
\begin{equation}
\mathcal L_{\mathrm{S^2CE}}(\theta; \mathcal{S}^m)
=  - \sum_{j} q_{\text{sem}}(j|j_u^+) \log p_\theta^m(j|s_u^m).
\end{equation}

\noindent \noindent We start from the definition of cross-entropy between two distributions $q(\cdot)$ and $p(\cdot)$:
\begin{equation}
H(q, p) = - \sum_j q(j) \log p(j).
\end{equation}
We add and subtract $\log q(j)$ inside the summation:
\begin{equation}
\begin{split}
H(q, p) 
&= - \sum_j q(j) \log p(j)\\
& = - \sum_j q(j) \log \frac{p(j)}{q(j)} - \sum_j q(j) \log q(j)\\
&= \sum_j q(j) \log \frac{q(j)}{p(j)} 
- \sum_j q(j) \log q(j).
\end{split}
\end{equation}
Recognizing the two terms:
\begin{equation}
\mathrm{KL}(q \| p) = \sum_j q(j) \log \frac{q(j)}{p(j)}, 
\quad
H(q) = - \sum_j q(j) \log q(j),
\end{equation}
we finally arrive at:
\begin{equation}
H(q, p) = \mathrm{KL}(q \| p) + H(q).
\end{equation}
Then, we can rewrite the loss as:
\begin{equation}
\mathcal L_{\mathrm{S^2CE}}(\theta; \mathcal{S}^m)
= 
\mathrm{KL}\big(q^m_{\text{sem}}(\cdot|j_u^+) \,\|\, p_\theta^m(\cdot|s_u^m)\big)
+ H\big(q^m_{\text{sem}}(\cdot|j_u^+)\big).
\end{equation}
Since the entropy term $H(q^m_{\text{sem}})$ does not depend on the model parameters, minimizing $\mathcal L_{\mathrm{S^2CE}}$ is equivalent to:
\begin{equation}
\arg\min_\theta 
\mathrm{KL}\big(q^m_{\text{sem}}(\cdot|j_u^+) \,\|\, p_\theta^m(\cdot|s_u^m)\big),
\end{equation}
so that at optimality, we have:
\begin{equation}
p_\theta^m(j|s_u^m) \approx q^m_\text{sem}(j|j_u^+),
\end{equation}
\textit{i.e.}, the predictive distribution is softened across semantically similar items, not concentrated on a single item.

\noindent \textbf{Step 2: Bounded Cross-Market Divergence.}

\noindent From Step 1, we have:
\begin{equation}
p_\theta^{m_1}(a|s_u^{m_1}) \ge 1 - \delta(\varepsilon), \quad
p_\theta^{m_2}(a|s_u^{m_2}) \approx q_\text{sem}^{m_2}(a|b),
\end{equation}
where $\delta(\varepsilon)\to0$ as $\varepsilon\to0$.

\noindent Denote the semantic similarity between items $a$ and $b$ as $\sigma = sim(\mathbf e_a, \mathbf e_b)\in[-1,1]$.  
Under the S$^2$CE loss, the soft target distribution for item $a$ in market $m_2$ with ground-truth $b$ is:
\begin{equation}
\begin{split}
    q_\text{sem}^{m_2}(a|b) & = \frac{\exp(sim(\mathbf e_a, \mathbf e_b)/\tau)}{\sum_{k\in \mathcal I^{m_2}} \exp(sim(\mathbf e_k, \mathbf e_b)/\tau)} \\ & = \frac{\exp(sim(\mathbf e_a,\mathbf e_b) / \tau)}{\sum_{k \neq b} \exp(sim(\mathbf{e}_k^\top \mathbf{e}_b) / \tau) + \exp(sim(\mathbf{e}_b^\top \mathbf{e}_b) / \tau)} \\ &\ge \frac{\exp(\sigma / \tau)}{|\mathcal{I}^{m_2}| \cdot \exp(1 / \tau)} = \frac{\exp({\sigma-1}/{\tau})}{|\mathcal{I}^{m_2}|}.
\end{split}
\end{equation}
The KL divergence between predictive distributions of the two markets satisfies:
\begin{equation}
\begin{split}
\mathrm{KL}(p_\theta^{m_1}\|p_\theta^{m_2}) 
&= \sum_j p_\theta^{m_1}(j|s_u^{m_1}) \log \frac{p_\theta^{m_1}(j|s_u^{m_1})}{p_\theta^{m_2}(j|s_u^{m_2})} \\
&= p_\theta^{m_1}(a|s_u^{m_1}) \log \frac{p_\theta^{m_1}(a|s_u^{m_1})}{p_\theta^{m_2}(a|s_u^{m_2})} + O(\epsilon) \\
&\approx  p_\theta^{m_1}(a|s_u^{m_1}) \log \frac{p_\theta^{m_1}(a|s_u^{m_1})}{q_\text{sem}^{m_2}(a|b)} + O(\epsilon) \\
&\le \log \frac{1}{q_\text{sem}^{m_2}(a|b)} + O(\epsilon) \\
&\le \log |\mathcal I^{m_2}| + \frac{1-\sigma}{\tau} + O(\epsilon).
\end{split}
\end{equation}
Hence, unlike vanilla CE where KL diverges as $\epsilon\to0$, S$^2$CE ensures a bounded divergence between markets:
\begin{equation}
\mathrm{KL}_{S^2CE}(p_\theta^{m_1}\|p_\theta^{m_2}) \ll \mathrm{KL}_{CE}(p_\theta^{m_1}\|p_\theta^{m_2}),
\end{equation}
indicating that the predictive distributions are now compatible and market heterogeneity is mitigated.

\noindent \textbf{Step 3: Heterogeneity Mitigation in Federated Optimization.}

\noindent Consider two markets $m_1 \neq m_2$ with ground-truth items $a \neq b$.  
\noindent From the definition of the semantic target, for any item $j \in \mathcal{I}^m$, we have:
\begin{equation}
q_{\text{sem}}^{m}(j|j_u^+) 
= \frac{\exp(sim(\mathbf e_j, \mathbf e_{j_u^+})/\tau)}{\sum_{k \in \mathcal I^m} \exp(sim(\mathbf e_k, \mathbf e_{j_u^+})/\tau)} > 0.
\end{equation}
Hence, the semantic target distributions $q_{\text{sem}}^{m_1}(\cdot|a)$ and $q_{\text{sem}}^{m_2}(\cdot|b)$ assign strictly positive probability mass to all items, implying that they have fully overlapping support. Moreover, for any item $j$, since $sim(\mathbf e_j, \mathbf e_{j_u^+}) \in [-1,1]$, we obtain:
\begin{equation}
\frac{\exp(-1/\tau)}{|\mathcal I^m| \cdot \exp(1/\tau)}
\le q_{\text{sem}}^{m}(j|j_u^+) 
\le \frac{\exp(1/\tau)}{|\mathcal I^m| \cdot \exp(-1/\tau)}.
\end{equation}
Thus, the ratio between two semantic distributions is bounded:
\begin{equation}
\frac{q_{\text{sem}}^{m_1}(j|a)}{q_{\text{sem}}^{m_2}(j|b)} \le C_0,
\end{equation}
for some constant $C_0 > 0$ depending only on $\tau$ and the item set size.

\noindent Therefore, the KL divergence between the two semantic distributions is bounded:
\begin{equation}
\mathrm{KL}\big(q_{\text{sem}}^{m_1} \,\|\, q_{\text{sem}}^{m_2}\big)
= \sum_j q_{\text{sem}}^{m_1}(j|a) \log \frac{q_{\text{sem}}^{m_1}(j|a)}{q_{\text{sem}}^{m_2}(j|b)}
\le \log C_0,
\end{equation}
for some constant $C = \log C_0 > 0$ depending on $\sigma$ and $\tau$.

\noindent Define the $\varepsilon$-optimal set for market $m$ under S$^2$CE:
\begin{equation}
\Theta_{m}^\varepsilon 
= \big\{ \theta : \mathcal{L}_{\mathrm{S^2CE}}(\theta; \mathcal{S}^m) 
\le \inf_{\theta'} \mathcal{L}_{\mathrm{S^2CE}}(\theta'; \mathcal{S}^m) + \varepsilon \big\}.
\end{equation}

\noindent From Step 1, minimizing S$^2$CE is equivalent to minimizing:
\begin{equation}
\mathrm{KL}\big(q^m_{\text{sem}}(\cdot|j_u^+) \,\|\, p_\theta^m(\cdot|s_u^m)\big),
\end{equation}
thus for any $\theta \in \mathcal{S}_m^\varepsilon$, we have:
\begin{equation}
\mathrm{KL}\big(q^m_{\text{sem}}(\cdot|j_u^+) \,\|\, p_\theta^m(\cdot|s_u^m)\big) \le \varepsilon.
\end{equation}

\noindent For any $\theta \in \Theta_{m_1}^\varepsilon$, combining the above and using the smoothness of the semantic distributions, we obtain:
\begin{equation}
\mathrm{KL}(q_{\text{sem}}^{m_2} \| p_\theta^{m_1}) \le \mathrm{KL}\big(q_{\text{sem}}^{m_2} \,\|\, q_{\text{sem}}^{m_1}\big)
+ \mathrm{KL}\big(q_{\text{sem}}^{m_1} \,\|\, p_\theta^{m_1}\big) \le C + \varepsilon.
\end{equation}

\noindent This shows that the predictive distribution learned from market $m_1$ remains close to the semantic target distribution of market $m_2$, \textit{i.e.}, cross-market supervision becomes compatible up to a bounded discrepancy. Combined with the bounded divergence in Step 2, this implies that $p_\theta^{m_2}$ does not deviate arbitrarily from $p_\theta^{m_1}$, and thus:
\begin{equation}
\mathrm{KL}(q_{\text{sem}}^{m_2} \| p_\theta^{m_2}) \le C' + \varepsilon,
\end{equation}
for some constant $C' > 0$. Recall that S$^2$CE loss can be written as:
\begin{equation}
\mathcal{L}_{\mathrm{S^2CE}}(\theta; \mathcal{S}^{m})
= \mathrm{KL}(q_{\text{sem}}^{m} \| p_\theta^{m}) + H(q_{\text{sem}}^{m}),
\end{equation}
where $H(q_{\text{sem}}^{m})$ is independent of $\theta$. For any $\theta \in \mathcal{S}_{m_1}^\varepsilon$, we have:
\begin{equation}
\mathcal{L}_{\mathrm{S^2CE}}(\theta; \mathcal{S}^{m_1})
\le \inf_{\theta'} \mathcal{L}_{\mathrm{S^2CE}}(\theta'; \mathcal{S}^{m_1}) + \varepsilon,
\end{equation}
and combining with the above bound, we obtain:
\begin{equation}
\mathcal{L}_{\mathrm{S^2CE}}(\theta; \mathcal{S}^{m_2})
\le \inf_{\theta'} \mathcal{L}_{\mathrm{S^2CE}}(\theta'; \mathcal{S}^{m_2}) + C' + \varepsilon.
\end{equation}

\noindent This shows that a parameter $\theta$ that is $\varepsilon$-optimal for market $m_1$ is also near-optimal for market $m_2$ up to a bounded gap. By symmetry, the same holds when exchanging $m_1$ and $m_2$. Hence, for sufficiently large tolerance, we have:
\begin{equation}
\Theta_{m_1}^\varepsilon \cap \Theta_{m_2}^{\varepsilon + C'} \neq \varnothing,
\quad
\bigcap_{m \in \mathcal{M}} \Theta_{m}^{\varepsilon + C'} \neq \varnothing.
\end{equation}

\noindent Consequently, the federated objective under S$^2$CE:
\begin{equation}
\theta^{\mathrm{fed}}_{\mathrm{S^2CE}} 
= \arg\min_\theta \sum_{m \in \mathcal{M}} \alpha_m \, \mathcal{L}_{\mathrm{S^2CE}}(\theta; \mathcal{S}^m),
\end{equation}
admits solutions that lie in the approximate $\varepsilon$-optimal sets of multiple markets, enabling near-optimal performance across markets up to a bounded gap. This formally shows that S$^2$CE mitigates market heterogeneity, thereby facilitating more effective federated optimization.

\end{proof}

\section{Detailed Algorithm Procedure}\label{app:algorithm}
In this section, we provide a detailed description of the proposed training pipeline, including the federated collaborative pretraining stage, \textit{i.e.}, Algorithm~\ref{alg:pretrain}, and the local market-specific fine-tuning stage, \textit{i.e.}, Algorithm~\ref{alg:finetune}.

\textbf{Federated Collaborative Pretraining.} We first perform federated pretraining to learn a shared behavior encoder across markets while preserving data privacy. Specifically, the server initializes a unified item semantic embedding matrix $\mathbf{E}$ according to Eq.~(\ref{eq:initialization}), and distributes market-specific subsets $\mathbf{E}^m$ to each market $m$. The global encoder $f_\theta^{agg(0)}(\cdot)$ is also initialized and shared.

At each global round $r$, each market first performs local pretraining. Specifically, it downloads the global behavior encoder $f_\theta^{agg(r-1)}(\cdot)$ from the previous round for local initialization. After that, it optimizes the semantic soft objective $\mathcal{L}_{S^2CE}$ on its private dataset $\mathcal{S}^m = \{s_u^m\}_{u \in \mathcal{U}^m}$ for local updates. The updated local models $\{f_\theta^{m(r)}(\cdot)\}_{m \in \mathcal{M}}$ are subsequently aggregated on the server to obtain the updated global model $f_\theta^{agg(r)}(\cdot)$, which is then broadcast to all markets for the next round of local training. This process repeats for $R$ rounds, yielding the final pretrained encoder $f_\theta^{agg(R)}(\cdot)$.

Note that throughout the entire pretraining stage, the item embeddings $\mathbf{E}^m$ remain frozen, and only $f_\theta^m(\cdot)$ is updated to capture behavior-level patterns.

\begin{algorithm}[tb]
\small
\caption{Federated Collaborative Pretraining}
\label{alg:pretrain}
\raggedright
\textbf{Input:} markets $\mathcal{M}$, local datasets $\{\mathcal{S}^m\}_{m \in \mathcal{M}}$, global rounds $R$, local epochs $E$, batch size $B$, learning rate $\eta$.\\
\textbf{Output:} pretrained models $\{f_\theta^{agg}(\cdot),\mathbf{E}^m\}_{m\in\mathcal{M}}$.\\

\textbf{Server executes:} \\
\begin{algorithmic}[1]
\STATE Generate unified item semantic embeddings $\mathbf{E}$ according to Eq.~(\ref{eq:initialization}).
\STATE Initialize the global behavior encoder $f_\theta^{agg(0)}(\cdot)$.

\FOR{each market $m \in \mathcal{M}$}
    \STATE Initialize and freeze item embeddings $\mathbf{E}^m = \mathbf{E}[\mathcal{I}^m]$.
    \STATE Initialize encoder $f_\theta^{m(0)}(\cdot) \gets f_\theta^{agg(0)}(\cdot)$.
\ENDFOR

\FOR{each global round $r = 1,2,\cdots,R$}
    \FOR{each market $m \in \mathcal{M}$ \textbf{in parallel}}
        \STATE $f_{\theta}^{m(r)}(\cdot) \gets \text{LocalPretrain}(f_\theta^{agg(r-1)}(\cdot), \mathcal{S}^m)$.
    \ENDFOR
    \STATE $f_{\theta}^{agg(r)} \gets$ Aggregate $\{f_{\theta}^{m(r)}\}_{m \in \mathcal{M}}$ according to Eq.~(\ref{eq:aggregation}).
    \STATE Broadcast aggregated $f_{\theta}^{agg(r)}(\cdot)$ to all markets.
\ENDFOR

\RETURN $\{f_\theta^{agg(R)}(\cdot),\mathbf{E}^m\}_{m\in\mathcal{M}}$
\end{algorithmic}

\textbf{LocalPretrain}$(f_\theta^{agg(r-1)}(\cdot), \mathcal{S}_m)$: \\
\begin{algorithmic}[1]
\STATE Initialize $f_\theta^{m(r-1)}(\cdot) \gets f_\theta^{agg(r-1)}(\cdot)$.
\FOR{each local epoch $e = 1,2,\cdots,E$}
    \FOR{each batch $b \in \mathcal{S}_m$}
        \STATE Minimize the semantic soft cross-entropy loss $\mathcal{L}_{S^2CE}$ in Eq.~(\ref{eq:s2ce}).
        \STATE Update $f_\theta^{m(r-1)}(\cdot)$ using $\eta \nabla \mathcal{L}_{S^2CE}$.
    \ENDFOR
\ENDFOR
\RETURN $f_\theta^{m(r)}(\cdot)$
\end{algorithmic}
\end{algorithm}

\begin{algorithm}[tb]
\small
\caption{Local Market-specific Fine-tuning}
\label{alg:finetune}
\raggedright
\textbf{Input:} markets $\mathcal{M}$, local datasets $\{\mathcal{S}^m\}_{m \in \mathcal{M}}$, pretrained models $\{f_\theta^{agg}(\cdot),\mathbf{E}^m\}_{m\in\mathcal{M}}$, local epochs $E$, batch size $B$, learning rate $\eta$.\\
\textbf{Output:} fine-tuned models $\{f_\theta^m(\cdot), \tilde{\mathbf{E}}^m\}_{m \in \mathcal{M}}$.\\

\begin{algorithmic}[1]
\FOR{each market $m \in \mathcal{M}$ \textbf{in parallel}}
    \STATE Freeze $f_\theta^{agg}(\cdot)$ and apply trainable LoRA parameters $\mathbf{W}_A, \mathbf{W}_B$.
    \STATE Freeze $\mathbf{E}^m$ and apply a trainable ID encoder $g_\theta^m(\cdot)$.
    \FOR{each local epoch $e = 1,2,\cdots,E$}
        \FOR{each batch $b \in \mathcal{S}_m$}
            \STATE Enhance item embeddings to obtain $\tilde{\mathbf{E}}^m$ via Eq.~(\ref{eq:idadapter}).
            \STATE Compute the vanilla cross-entropy loss $\mathcal{L}_{CE}$ in Eq.~(\ref{eq:ce}).
            \STATE Update $\mathbf{W}_A, \mathbf{W}_B$ and $g_\theta^m(\cdot)$ using $\eta \nabla \mathcal{L}_{CE}$.
        \ENDFOR
    \ENDFOR
\ENDFOR
\RETURN $\{f_\theta^m(\cdot), \tilde{\mathbf{E}}^m\}_{m \in \mathcal{M}}$
\end{algorithmic}
\end{algorithm}

\textbf{Local Market-specific Fine-tuning.} During local fine-tuning, the model shifts from capturing shared behavior-level patterns to focusing on market-specific item-level preferences. For the pretrained encoder $f_\theta^{agg}(\cdot)$, we freeze its parameters and introduce lightweight trainable LoRA parameters $\mathbf{W}_A, \mathbf{W}_B$ for efficient adaptation. For the item embeddings $\mathbf{E}^m$, we still keep them frozen and apply a trainable ID encoder $g_\theta^m(\cdot)$ to enhance item representations.

During this process, each market optimizes the vanilla cross-entropy loss $\mathcal{L}_{CE}$ over its local dataset $\mathcal{S}^m$ for accurate next-item prediction, and only the LoRA parameters $\mathbf{W}_A\mathbf{W}_B$ and the ID encoder $g_\theta^m(\cdot)$ are updated.

Overall, the proposed two-stage framework achieves a balance between cross-market knowledge sharing and market-specific adaptation, leading to improved performance and robustness across heterogeneous markets.

\section{Additional Experiment Results}\label{app:experiments}
\subsection{Heterogeneity Mitigation}

\begin{figure}[ht]
\centering
\includegraphics[width=1.0\columnwidth]{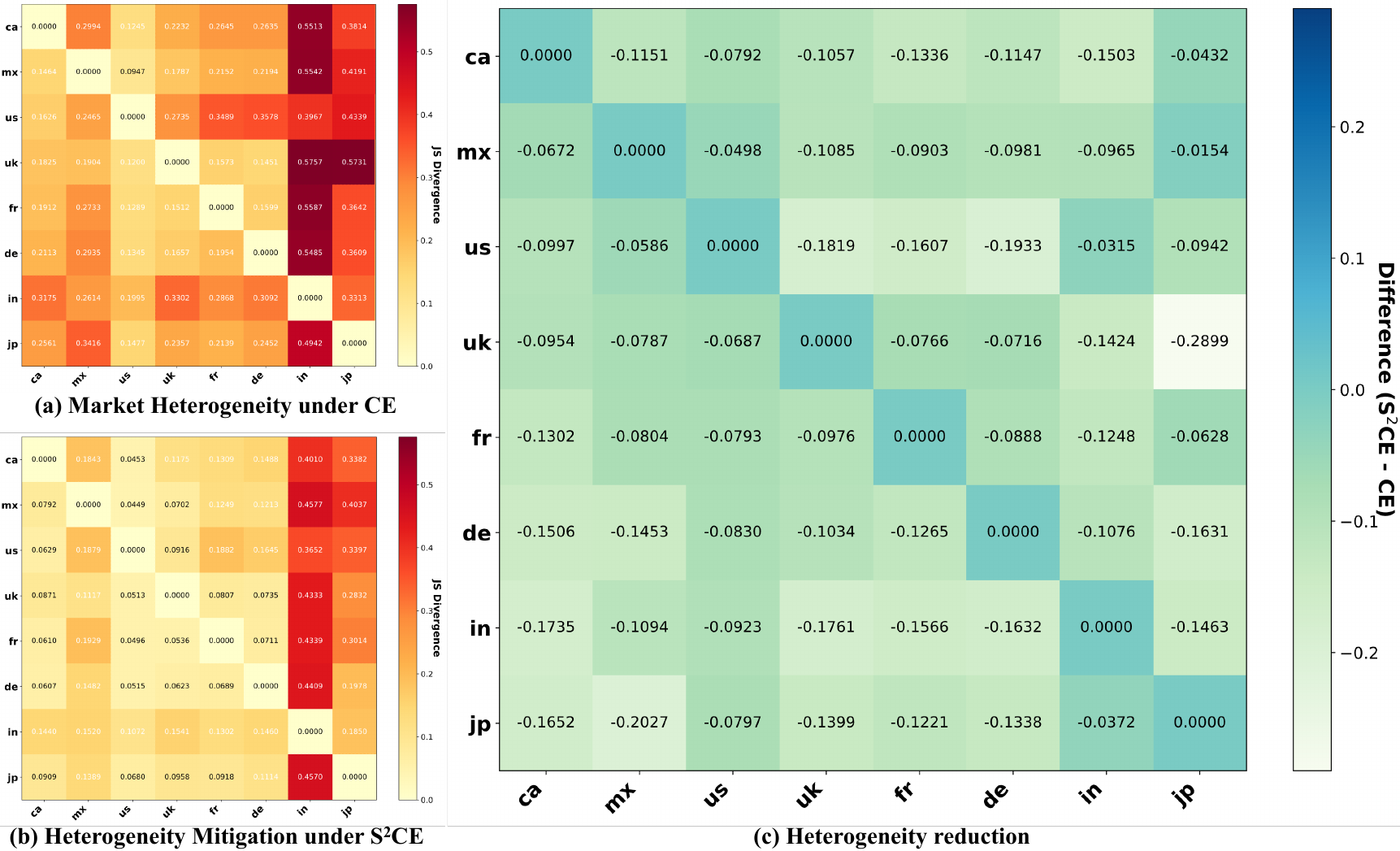}
\caption{Inter-market heterogeneity under CE and S$^2$CE.}
\label{pic:heterogeneity_reduced}
\end{figure}

\subsubsection{Inter-market Heterogeneity}

To quantify inter-market heterogeneity, we measure the divergence between prediction distributions induced by behavior encoders from different markets. 
Specifically, for each market $m \in \mathcal{M}$, we use its local dataset $\mathcal{S}^m = \{s_u^m\}_{u \in \mathcal{U}^m}$ as evaluation data. 
Given a behavior encoder $f_\theta^{m'}(\cdot)$ from another market $m'$, we compute the predicted item distribution for each sequence $s_u^m$ using the instance-level predictive distribution $p_\theta^{m'}(i | s_u^m)$. The overall prediction distribution is obtained by averaging over all users:
\begin{equation}
P^{m' \rightarrow m}(i) = \frac{1}{|\mathcal{U}^m|} \sum_{u \in \mathcal{U}^m} p_\theta^{m'}(i | s_u^m),
\end{equation}
which measures how well the behavioral patterns learned from another market $m'$ generalize to the local data of market $m$ in terms of the induced prediction distribution. Meanwhile, this formulation extends the instance-level predictive distribution $p_\theta^m(\cdot | s_u^m)$ used in the theoretical analysis to a population-level distribution over all local items, providing a quantitative basis for measuring inter-market heterogeneity.

To measure the heterogeneity between two markets $m$ and $m'$, we compute the Jensen--Shannon (JS) divergence between their prediction distributions:
\begin{equation}
\mathrm{JS}(P^{m \rightarrow m}, P^{m' \rightarrow m}) 
= \frac{1}{2} \mathrm{KL}(P^{m \rightarrow m} \,\|\, M) 
+ \frac{1}{2} \mathrm{KL}(P^{m' \rightarrow m} \,\|\, M),
\end{equation}
where $M = \frac{1}{2}(P^{m \rightarrow m} + P^{m' \rightarrow m})$ and $\mathrm{KL}(\cdot\|\cdot)$ denotes the Kullback--Leibler divergence. 
Here, $P^{m \rightarrow m}$ captures the intrinsic preference distribution of market $m$, while $P^{m' \rightarrow m}$ represents the prediction distribution induced by applying the behavior encoder from market $m'$ to the data of market $m$. The JS divergence then quantifies the alignment between the two distributions. A larger JS divergence indicates stronger heterogeneity between markets.

\subsubsection{Results Analysis} Figure~\ref{pic:heterogeneity_reduced} compares inter-market heterogeneity under the CE and S$^2$CE objectives, and also presents a difference matrix highlighting the reduction achieved by S$^2$CE. In the matrices, each row corresponds to the local data from a given market $m$, and each column corresponds to the behavior encoder from another market $m'$. 

This empirical observation is consistent with the theoretical analysis. Under the CE objective, behavioral patterns learned from other markets exhibit limited transferability to the local data, resulting in large divergence values and thus strong inter-market heterogeneity, as shown in Figure~\ref{pic:heterogeneity_reduced}(a). 
In contrast, S$^2$CE significantly reduces the divergence across markets, as shown in Figure~\ref{pic:heterogeneity_reduced}(b). This improvement can be attributed to the shift from item-level discrimination to behavior-level pattern learning. By encouraging the model to capture shared semantic structures rather than market-specific item preferences, S$^2$CE alleviates the heterogeneity across markets. The difference matrix in Figure~\ref{pic:heterogeneity_reduced}(c) further provides empirical evidence for the effectiveness of S$^2$CE.

\begin{figure}[ht]
\centering
\includegraphics[width=1.0\columnwidth]{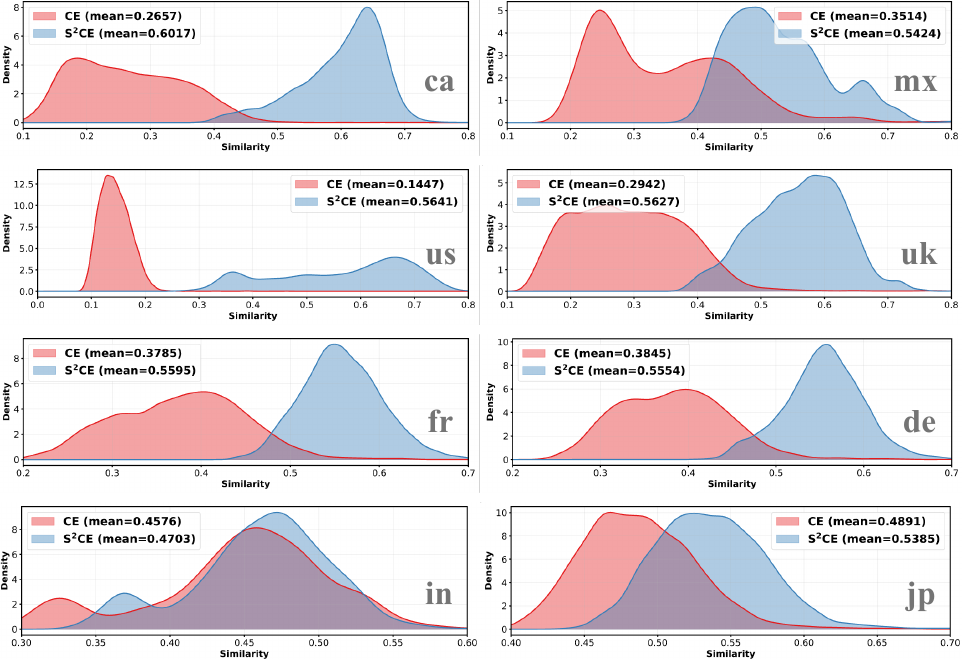}
\caption{Distributions of inter-user representation similarities under behavior encoders trained with CE and S$^2$CE.}
\label{pic:CEvsS2CE}
\end{figure}

\begin{figure*}[htbp]
\centering
\includegraphics[width=1.0\textwidth]{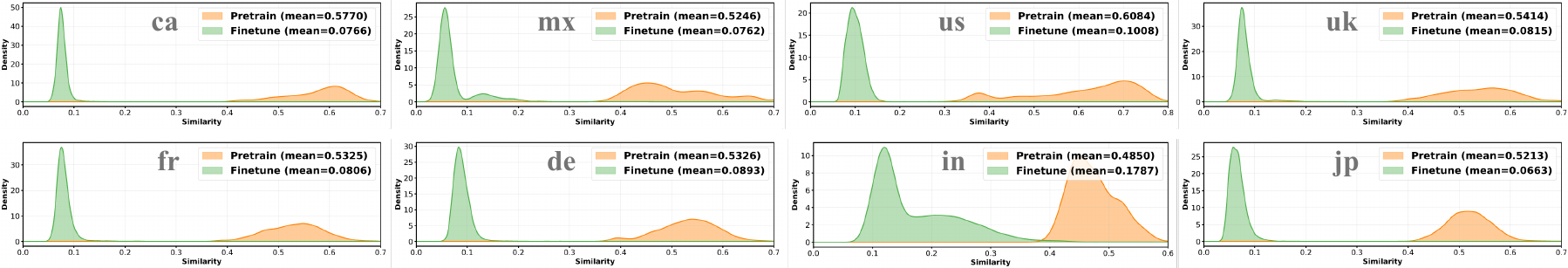}
\caption{Distributions of inter-user representation similarities before and after behavior encoder fine-tuning.}
\label{pic:behavior}
\end{figure*}

\begin{figure*}[htbp]
\centering
\includegraphics[width=1.0\textwidth]{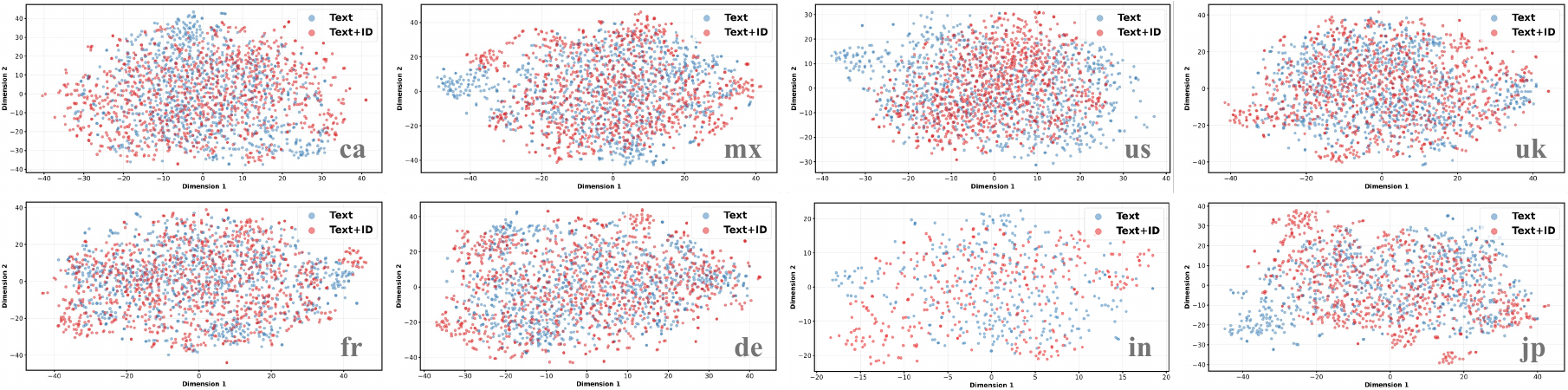}
\caption{Visualization of item embeddings before and after incorporating the ID modality.}
\label{pic:embedding}
\end{figure*}

\subsection{Behavior-level Patterns}
\subsubsection{Inter-user Representation Similarity} Let $\mathbf{h}_u^m = f_\theta^m(s_u^m)$ denote the representation of user $u$ in market $m$. For any two users $u, v \in \mathcal{U}^m$, we define their similarity based on the $\ell_2$ distance as:
\begin{equation}
\text{sim}(u,v) = \frac{1}{1 + \|\mathbf{h}_u^m - \mathbf{h}_v^m\|_2},
\end{equation}
which maps distances to the range $(0,1]$, where larger values indicate higher similarity. We adopt this formulation instead of cosine similarity, as cosine similarity in high-dimensional spaces is often less discriminative, making it harder to distinguish variations in user representations.

Let the similarities across all user pairs, \textit{i.e.}, 
$\{sim(u,v) | u,v \in \mathcal{U}^m, u \neq v\}$, 
be denoted by $\{s_i\}_{i=1}^N$ for brevity, where 
$N = \frac{|\mathcal{U}^m| \left(|\mathcal{U}^m| - 1\right)}{2}$. 
To examine their distribution, we apply kernel density estimation (KDE), which provides a smooth estimate of the underlying density:
\begin{equation}
\hat{p}^m(s) = \frac{1}{N} \sum_{i=1}^{N} K\left(\frac{s - s_i}{h}\right),
\end{equation}
where $K(\cdot)$ denotes a kernel function (e.g., Gaussian), and $h$ is the bandwidth parameter. 
This enables us to characterize the dispersion of user representations in the latent space.

\subsubsection{Results Analysis} We compare the inter-user representation similarity learned under the CE and S$^2$CE objectives, as shown in Fig.~\ref{pic:CEvsS2CE}. Under the CE objective, the similarity values are concentrated in a relatively low range, indicating that user representations are more dispersed and primarily capture individual item-level preferences. suggesting the emergence of clustered structures, where users within the same cluster share similar semantic interests over items. This implies that S$^2$CE encourages the model to capture more holistic behavior-level patterns. Importantly, the similarity distribution under S$^2$CE still maintains sufficient spread, indicating that the model learns informative behavioral patterns without collapsing to a trivial solution.

We further analyze two representative markets, \texttt{us} and \texttt{in}, corresponding to the largest and smallest datasets, respectively. 
For \texttt{us}, CE-based similarities are mainly concentrated in $[0.1, 0.2]$ with a sharp distribution, whereas S$^2$CE leads to a wider distribution range. This is likely due to the larger number of users, which allows for more diverse similarity patterns. For \texttt{in}, the increase in similarity under S$^2$CE is less pronounced. 
A possible reason is that the limited number of users restricts the formation of shared patterns. Overall, S$^2$CE yields consistently higher similarity values than CE across different markets, which overall supports the aforementioned observations and analysis.

\subsection{Item-level Preferences}

\subsubsection{Experimental Setup} We experimentally validate the two key components of the sequential recommendation model, \textit{i.e.}, the behavior encoder and the item embeddings. For the behavior encoder, we analyze the distribution of inter-user representation similarity to compare the model before and after applying low-rank fine-tuning to $f_\theta^m(\cdot)$. For the item embeddings, we employ t-SNE to visualize the embeddings before and after applying $g_\theta^m(\cdot)$ to incorporate the ID modality, \textit{i.e.}, $\mathbf{E}^m$ and $\tilde{\mathbf{E}}^m$. For clearer visualization, we randomly sample up to 400 items.

\subsubsection{Results Analysis} 

Figure~\ref{pic:behavior} shows the distribution of inter-user representation similarity before and after fine-tuning the behavior encoder. A consistent trend is observed across all markets, that is, after fine-tuning, the similarity values are significantly reduced and become concentrated in the range of $[0, 0.2]$. This indicates that user representations become more discriminative, focusing more on item-level distinctions rather than the shared behavior-level patterns observed before fine-tuning. Moreover, since we adopt low-rank fine-tuning, most pretrained weights are preserved. This suggests that the underlying behavioral patterns are largely retained, while the model shifts toward capturing more market-specific characteristics.

Figure~\ref{pic:embedding} visualizes the item embeddings before and after incorporating the ID modality. During pretraining, we initialize item embeddings using text modality to obtain a stable representation space, facilitating collaborative learning of behavioral patterns. Based on this, incorporating the ID modality further refines the item embeddings. It can be observed that the updated embeddings remain in a similar representation space while exhibiting clearer structure, indicating that the pretrained information is largely preserved while enabling item-level adaptation to market-specific characteristics.

\subsection{Hyperparameter Analysis}
We further provide additional experimental results in Section~\ref{sec:exp_hyper} to analyze the impact of hyperparameters, reported by NDCG@10.

\subsubsection{The temperature coefficient ${\tau}$} As shown in Fig.~\ref{pic:param_tau_NDCG}, the observed trend is consistent with that of HR@10. 
Specifically, although different datasets exhibit varying sensitivity to $\tau$, they all achieve near-optimal performance within a moderate range, \textit{i.e.}, $\tau \in [0.03, 0.07]$. This reflects the trade-off induced by $\tau$, \textit{i.e.}, small $\tau$ yields a highly peaked distribution with one-hot-like supervision, whereas large $\tau$ results in over-smoothing and reduced discriminability. In our experiments, we adopt a shared $\tau$ across all markets for training consistency. In practice, allowing each market to explore its own $\tau$ may further improve performance. However, it would require more sophisticated tuning strategies to maintain consistency and effective coordination across markets. Nevertheless, using a unified setting of $\tau = 0.05$ still consistently outperforms baseline methods, demonstrating both the effectiveness of our approach and its robustness to the choice of hyperparameters.

\begin{figure}[h]
\centering
\includegraphics[width=1.0\columnwidth]{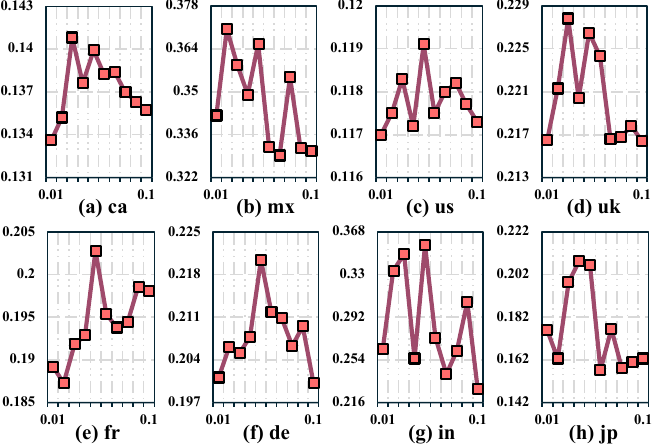}
\caption{Effect of the temperature parameter $\tau$ during federated pretraining, evaluated by NDCG@10.}
\label{pic:param_tau_NDCG}
\end{figure}

\begin{figure}[h]
\centering
\includegraphics[width=1.0\columnwidth]{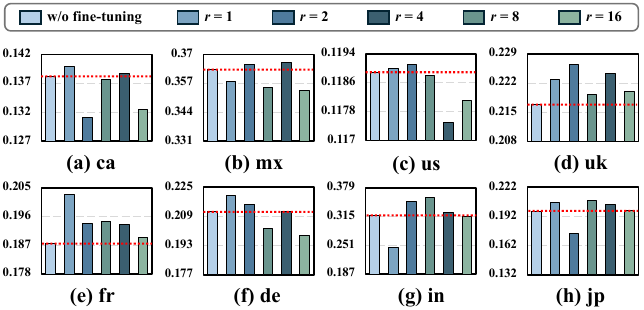}
\caption{Effect of the low-rank $r$ during local fine-tuning, evaluated by NDCG@10.}
\label{pic:param_rank_NDCG}
\end{figure}

\subsubsection{The low-rank parameter ${r}$} As the NDCG@10 results shown in Table~\ref{pic:param_rank_NDCG}, the best performance is achieved after fine-tuning. 
Specifically, smaller rank values, \textit{e.g.}, $r\in\{1,2,4\}$, tend to yield better performance across most markets. For example, $r=2$ or $4$ achieves the best or near-best results on several datasets such as \textbf{mx}, \textbf{uk}, and \textbf{in}, while larger ranks, \textit{e.g.}, 16, do not bring further improvements and may even slightly degrade performance. These results suggest that the pretrained model has already captured sufficiently rich behavioral patterns, and only lightweight item-level adaptation is needed to fit local market characteristics.

\end{document}